\newtheorem{definition}{Definition}[section]
\newtheorem{lemma}[definition]{Lemma}
\newtheorem{theorem}[definition]{Theorem}
\newtheorem{observation}[definition]{Observation}
\newcommand{\qed}{\hfill $\square$\smallskip}
\newcommand{\argmax}{\mathop{\mathrm{argmax}}}
\title{Spread of Influence in Graphs\thanks{A preliminary version of this paper~\cite{zehmakan2019tight} appeared in 13th International Conference on Language and Automata Theory and Applications.}}
\author{Ahad N. Zehmakan\thanks{Corresponding author; Email Address: abdolahad.noori@inf.ethz.ch, Postal Address: CAB G 39.3, Institute of Theoretical Computer Science, ETH Z\"urich, Universit\"atstrasse 6, CH-8092 Z\"urich.}}
\affil{Department of Computer Science, ETH Zurich}
\providecommand{\keywords}[1]{\textbf{\textit{Key Words:}} #1}
\date{} 
\begin{document}
\maketitle
\begin{abstract}
Consider a graph $G$ and an initial configuration where each node is black or white. Assume that in each round all nodes simultaneously update their color based on a predefined rule. One can think of graph $G$ as a social network, where each black/white node represents an individual who holds a positive/negative opinion regarding a particular topic. In the $r$-threshold (resp. $\alpha$-threshold) model, a node becomes black if at least $r$ of its neighbors (resp. $\alpha$ fraction of its neighbors) are black, and white otherwise. The $r$-monotone (resp. $\alpha$-monotone) model is the same as the $r$-threshold (resp. $\alpha$-threshold) model, except that a black node remains black forever.

What is the number of rounds that the process needs to stabilize? How many nodes must be black initially so that black color takes over or survives? Our main goal in the present paper is to address these two questions.
\end{abstract}
\keywords{Dynamic monopoly, bootstrap percolation, threshold model, percolating set, target set selection, opinion dynamics.}
\section{Introduction}
In 1991, an elaborate rumor circled the African American community. The rumor was that a brand of soda, Tropical Fantasy Soda Pop, was made by the Ku-Klux-Klan. Not only that, but the soda was made with a special formula that made black men sterile. Of course, it was just a rumor and one that turned out to be false. Nevertheless, sales of the soda dropped by 70\% following the rumor, and people even began attacking the company's delivery trucks. How do rumors like this spread, and how can they become so powerful that they prompt action even when there is no merit to the claim?

Traditionally, the rumors used to spread simply by word of mouth, but these days, rumors have advanced technology to help them travel. With the development of social network services, such as Facebook, Twitter, and Instagram, the public has been provided with more extensive, fast, and real-time information access and exchange platforms. For example on April 23, 2013, a fake tweet from a hacked Associated Press account claimed that explosions at the White House had injured Barack Obama. That lone tweet caused instability on world financial markets, and the Standard and Poor's 500 Index lost \$130 billion in a short period.\footnote{See~\cite{coombe1997demonic,fisher2013syrian} for more details on these two examples and also other examples.}

Countless real-world examples of this kind demonstrate that the spread of (mis-)information can have a significant impact on various subjects, such as economy, defense, fashion, even personal affairs. Furthermore, the wide use of Internet virtually has sped up the dissemination of information and influence through communities. Therefore in the last few years, there has been a growing interest in the following fundamental questions:   
\begin{displayquote}
\centering \textit{How does a rumor spread?}\\ \textit{How do people form their opinion?}
\\ \textit{How do a community's members influence each other?}
\end{displayquote}
Researchers from a wide spectrum of fields, such as political and social science, computer science, statistical physics, and mathematics, have devoted a substantial amount of effort to address these questions. From a theoretical perspective, it is natural to introduce and study mathematical models which mimic rumor spreading and opinion formation in a community. In the real world, they are too complex to be explained in purely mathematical terms. However, the main idea is to comprehend their general principles and make crude approximations at discovering certain essential aspects of them which are otherwise totally hidden by the complexity of the full phenomenon. In particular, there are recurring patterns that one can try to capture mathematically.

Some of the most well-established models are DK model, majority dynamics, push/pull protocols, Galam model, energy model, and SIR model. We are interested in the following very basic model, which actually captures some of the aforementioned models.
\paragraph{Basic Model.}Consider a graph $G$ and an initial configuration where each node is either black or white. Assume that in discrete-time rounds, all nodes simultaneously update their color based on a predefined rule. 

One can think of graph $G$ as a social network, the graph of relationships and interactions within a group of individuals. That is, each node represents an individual and we add an edge between two nodes if the corresponding individuals are, for instance, friends. Furthermore, black and white stand for the state of an individual, for example, informed/uninformed about a rumor or positive/negative regarding a reform proposal. For the suitable choices of the updating rule, this model can potentially mimic the rumor spreading and opinion forming processes such as the diffusion of technological innovations in a community or the rise of a political movement in an unstable society. For example, assume that a college student starts using a cellphone if a certain fraction or number of his/her classmates have one, or an individual adopts the positive opinion regarding a reform proposal if a majority of his/her friends are positive, and negative otherwise. Motivated by such examples, we consider the following five updating rules.
\paragraph{$r$-Threshold Model.}In the \emph{$r$-threshold model} for some fixed positive integer $r$, a node becomes black if it has at least $r$ black neighbors and white otherwise. See Figure~\ref{r example} for an example. 
\begin{figure}[!ht]
\centering
\includegraphics[scale=0.8]{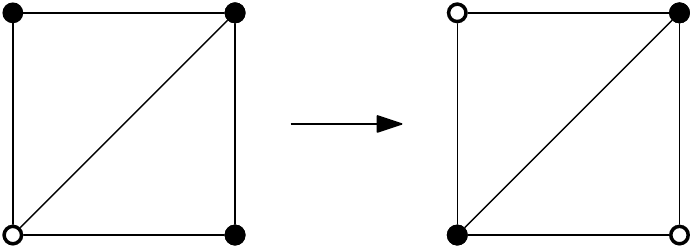}
\caption{One round of the $r$-threshold model for $r=2$.\label{r example}}
\end{figure}
\paragraph{$\alpha$-Threshold Model.}In the \emph{$\alpha$-threshold model} for some constant $0<\alpha<1$, a node becomes black if at least $\alpha$ fraction of its neighbors are black and white otherwise.\footnote{There should not be any confusion between the $\alpha$-threshold and $r$-threshold model since r is an integer value larger than equal to $1$ and $\alpha$ is smaller than $1$.} See Figure~\ref{alpha example} for an example.

\begin{figure}[!ht]
\centering
\includegraphics[scale=0.8]{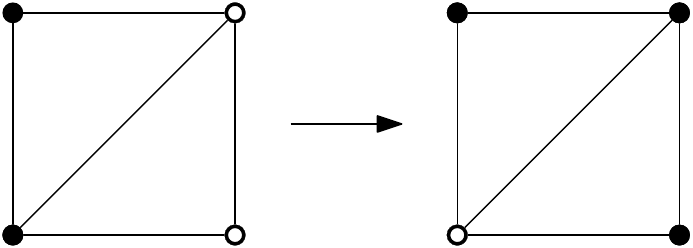}
\caption{One round of the $\alpha$-threshold model for $\alpha=1/2$.\label{alpha example}}
\end{figure}
\paragraph{$r$-Monotone and $\alpha$-Monotone Model.}The \emph{$r$-monotone model} (resp. \emph{$\alpha$-monotone model}) is the same as the $r$-threshold model (resp. $\alpha$-threshold model), except that a black node remains black forever.\footnote{These two models sometimes are known as bootstrap percolation in this literature.} For example, if we consider the $r$-monotone (resp. $\alpha$-monotone) model instead of the $r$-threshold (resp. $\alpha$-threshold) model in Figure~\ref{r example} (resp. Figure~\ref{alpha example}), the graph becomes fully black in the next round.

We notice that on a $d$-regular graph the $r$-threshold (resp. $r$-monotone) model is the same as the $\alpha$-threshold (resp. $\alpha$-monotone) model for $\alpha=r/d$.
\paragraph{Majority Model.}We also define the \emph{majority model} where each node updates its color to the most frequent color in its neighborhood and keeps its current color in case of a tie. See Figure~\ref{majority example} for an example. 

\begin{figure}[!ht]
\centering
\includegraphics[scale=0.8]{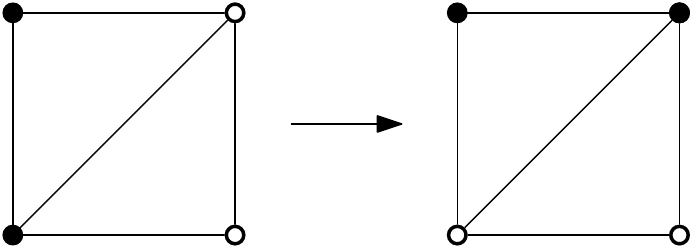}
\caption{One round of the majority model.\label{majority example}}
\end{figure}
We observe that the majority model is basically the same as the $\alpha$-threshold model for $\alpha=1/2$, up to the tie-breaking rule. The examples in Figures~\ref{alpha example} and~\ref{majority example} should illustrate the difference in the tie-breaking rule. 

\paragraph{Stabilization Time and Periodicity.}The updating rule in all of these five models is deterministic and there are $2^{n}$ possible configurations for an $n$-node graph $G$. Thus, by starting from an arbitrary configuration on graph $G$, the process must eventually reach a cycle of configurations. The length of this cycle and the number of rounds that the process needs to reach it are called the \emph{periodicity} and the \emph{stabilization time} of the process, respectively. In the $r$-monotone and $\alpha$-monotone model the periodicity is one and the stabilization time is upper-bounded by $\mathcal{O}(n)$, regardless of the initial configuration. As our first main contribution, we provide the tighter bound of $\mathcal{O}(n/\delta)$ on the stabilization time of the $r$-monotone model, where $\delta$ denotes the minimum degree in $G$. Furthermore for the $r$-threshold, $\alpha$-threshold, and majority model, $2^n$ is a trivial upper bound on both periodicity and stabilization time of the process. However, interestingly Goles and Olivos~\cite{goles1980periodic} proved that the periodicity is always one or two. Furthermore, Fogelman,
Goles, and Weisbuch~\cite{fogelman1983transient} showed that the stabilization time is bounded by $\mathcal{O}\left(m\right)$, where $m$ is the number of edges in $G$.
 
\paragraph{Dynamo.}Arguably, the most well-studied question in this context is to determine the number of nodes which must be black initially so that black color \textit{takes over} (i.e., the whole graph becomes black eventually). A node set $D$ is said to be a \emph{dynamic monopoly}, or shortly \emph{dynamo}, if black color takes over once all nodes in $D$ are black. See Figure~\ref{dynamo example} for an example. Thus, the above question is essentially asking for the minimum size of a dynamo. In our examples from above, this can be interpreted as the minimum number of individuals who must support an innovation or opinion to make it go viral and take over the whole community. 
\begin{figure}[!ht]
\centering
\includegraphics[scale=0.8]{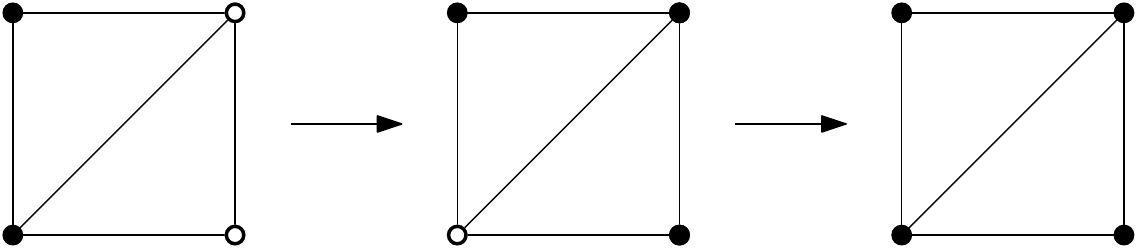}
\caption{A dynamo in the $\alpha$-threshold model for $\alpha=1/2$.\label{dynamo example}}
\end{figure}

Though the concept of a dynamo had been studied before, e.g. by Balogh
and Pete~\cite{balogh1998random} and Schonmann~\cite{schonmann1992behavior}, it was first introduced formally by Peleg~\cite{peleg1997local}. This initiated an extensive study of dynamos on different classes of graphs, including the $d$-dimensional torus~\cite{hambardzumyan2017polynomial,balister2010random,morrison2018extremal,jeger2019dynamic,gartner2017color,gartner2017majority,zehmakan2018two}, hypercube~\cite{morrison2018extremal,balogh2006bootstrap,balogh2009majority}, Erd\H{o}s–R\'{e}nyi random graph~\cite{janson2012bootstrap,feige2017contagious,zehmakan2019opinion,zehmakan2019target}, random regular graphs~\cite{coja2015contagious,guggiola2015minimal,gartner2018majority}, power-law random graphs~\cite{amini2014bootstrap,abdullah2014phase}, and so on.

Here, we do not limit ourselves to a specific graph structure and our main goal is to characterize the behavior of the above five models on general graphs. In particular, we aim to establish several lower and upper bounds on the minimum size of a dynamo in terms of the number of nodes and edges in the underlying graph. See for example Table~\ref{general:Table 1}, where we present lower and upper bounds in terms of the number of nodes and of course the parameters $\alpha$ and $r$. A lower bound $LB(n,\alpha,r)$ implies that in an arbitrary graph $G$ any dynamo is of size at least $LB$. An upper bound $UB(n,\alpha,r)$ implies that in an arbitrary graph $G$ there is a dynamo of size at most $UB$. Some of these bounds are known by prior work, which are marked with the corresponding references, and our results are also marked with the according theorems, observations, and lemmas. Some of these bounds are quite straightforward. For instance in the $r$-monotone and $r$-threshold model, $r$ is an obvious lower bound on the minimum size of a dynamo, and it is tight since the complete graph $K_n$ has a dynamo of size $r$ (see Lemma~\ref{general:lemma} for a proof). However, some of the bounds are much more involved and require novel ideas. All bounds given in Table~\ref{general:Table 1} are tight up to an additive constant, except for certain ranges of $\alpha$ and the upper bound in the majority model. We say a lower/upper bound is tight if there is a graph $G$ for which the minimum size of a dynamo is equal to it.

\begin{table}[t]
\centering
\begin{tabular}{ |l|c|c| } 
 \hline
Model& Lower Bound &Upper Bound\\
 \hline
 $\alpha$-monotone   &1~(Obs.~\ref{general:thm1}) & $2\alpha n$~\cite{garbe2018contagious}\\
 \hline
 $\alpha$-threshold $\ \alpha>3/4$&   $2\alpha\sqrt{n}-1$~\cite{berger2001dynamic}  & $n$\\
 \hline
 $\alpha$-threshold $\ \alpha\le3/4$&   $1$~(Lem.~\ref{general:obs1})  & $n$\\
\hline
majority model & 1~\cite{berger2001dynamic} & $n$\\
\hline
 $r$-monotone   & $r$   &$rn/(r+1)$~\cite{reichman2012new}\\
 \hline
 $r$-threshold $\ r\ge 2$&   $r$  &$n$ \\
 \hline
 $r$-threshold  $\ r=1$ & 1~(Thm.~\ref{general:thm4}) & 2~(Thm.~\ref{general:thm4})\\
 \hline
\end{tabular}
\caption{The minimum size of a dynamo.\label{general:Table 1}}
\end{table}

The techniques that we use to prove the bounds in Table~\ref{general:Table 1} are fairly standard and straightforward (some new and some inspired from prior work). The upper bounds are built on the probabilistic method and some greedy algorithms. For the lower bounds, we define a suitable potential function, like the number of edges whose endpoints have different colors in a configuration. Then, careful analysis of the behavior of such a potential function during the process allows us to establish lower bounds on the size of a dynamo. To prove the tightness of our results, we provide explicit graph constructions for which the minimum size of a dynamo matches our bounds.

Similar to dynamo, we also consider the minimum size of a monotone dynamo, for the $\alpha$-threshold, $r$-threshold, and majority model, and present several tight bounds. A node set $D$ is a monotone dynamo if it is a dynamo and takes over the whole graph monotonically. Notice that any dynamo is monotone by definition in the $r$-monotone and $\alpha$-monotone model. Therefore, all bounds from Table~\ref{general:Table 1} hold also for the minimum size of a monotone dynamo in these two models.

All the above bounds are a function of the number of nodes in the underlying graph (and the model parameters $r$ and $\alpha$). We also consider the number of edges. In particular, we prove that in the $r$-threshold model any dynamo is of size at least $2\left(n-\frac{m}{r}\right)$. We will argue that even though this bound is provided in a general framework, it is tight for special classes of graphs such as the $d$-dimensional torus. The main idea behind the proof of the above bound is to show that in some particular settings, any lower bound on the minimum size of a dynamo in bipartite graphs holds actually for all graphs. Thus, we need to consider only bipartite graphs, which can be handled more easily.

A simple observation is that by adding an edge to a graph, the minimum size of a dynamo in the $r$-monotone and $r$-threshold model does not increase (for a formal argument, please see Section~\ref{general:dynamo}). Thus, if one keeps adding edges to a graph, eventually it will have a dynamo of minimum possible size, i.e., $r$. Thus, it would be interesting to ask for the degree-based density conditions which ensure that a graph $G$ has a dynamo of size $r$. This was studied for the $r$-monotone model, with regard to the minimum degree, by Freund, Poloczek, and Reichman~\cite{freund2018contagious}. They proved that if the minimum degree $\delta\left(G\right)$ is at least $\lceil (r-1)n/r\rceil$, then there is a dynamo of size $r$ in $G$. Gunderson~\cite{gunderson2017minimum} showed that the statement holds even for $\delta \ge n/2+r$, and this is tight up to an additive constant. We study the same problem for the threshold variant and prove that if $\delta\ge \frac{n}{2}+r$ then the graph includes $\Omega\left(n^r\right)$ dynamos of size $r$. Note that this statement is stronger than Gunderson's result in the following two ways. Firstly, we prove that there is a dynamo of size $r$ in the $r$-threshold model, which implies that there is a dynamo of size $r$ in the $r$-monotone model. Moreover, we show that there is not only one but also $\Omega\left(n^r\right)$ of such dynamos. It is worth to stress that our proof is substantially shorter and simpler.

\paragraph{Robust and Eternal Sets.} So far we focused on the minimum number of black nodes which can make the whole graph black. One might relax this and ask for the minimum number of black nodes which guarantee the survival of black color forever; that is, there is at least one black node in all upcoming rounds. To address this question, we introduce and study the concept of an eternal set. A node set $S$ is said to be \emph{eternal} if there will be at least one black node in all upcoming rounds once all nodes in $S$ are black. We also consider the concept of a robust set. A node set $S$ is said to be \emph{robust} if all nodes remain black forever once $S$ is fully black.

Similar to dynamo, we provide tight bounds on the minimum size of a robust and an eternal set, as a function of $n$, $r$, and $\alpha$. In the $r$-monotone and $\alpha$-monotone model, a black node stays unchanged; thus, the minimum size of a robust/eternal set is equal to one for any graph. However, the situation is a bit more involved in the $r$-threshold, $\alpha$-threshold, and majority model. Building on several new proof techniques, we bound the minimum size of a robust/eternal set for these three models. See Tables~\ref{general:Table 2Robust} and~\ref{general:Table 2Eternal} for a summary of our results, where $x=0$ and $x=1$ respectively for odd and even $n$.
All these bounds are tight, except the bound of $2\alpha n+1/\alpha$.

\begin{table}
\centering
\begin{tabular}{|l|c|c|}
\hline
Model& Lower Bound & Upper Bound\\
\hline
$\alpha$-threshold $\ \alpha\le\frac{1}{2}$&   $\lceil \frac{1}{1-\alpha}\rceil$~(Thm.~\ref{general:thm7})  & $2\alpha n+1/\alpha$~(Thm.~\ref{general:thm7})\\
  \hline
$\alpha$-threshold $\ \alpha>\frac{1}{2}$&   $\lceil \frac{1}{1-\alpha}\rceil$~(Thm.~\ref{general:thm7})  & $n$~(Thm.~\ref{general:thm7})\\
  \hline
  majority model& $2$ & $\lfloor n/2\rfloor+1$~(Thm.~\ref{general:majoritythm1})\\
    \hline
$r$-threshold $\ r=1$&   $2$  &$2$\\
 \hline
$r$-threshold $\ r\ge 2$ & $r+1$ & $n$\\
 \hline
\end{tabular}
\caption{The minimum size of a robust set.\label{general:Table 2Robust}}
\end{table}

\begin{table}
\centering
\begin{tabular}{|l|c|c|c|c|}
 \hline
Model & Lower Bound & Upper Bound\\
 \hline
$\alpha$-threshold $\ \alpha\le\frac{1}{2}$& $1$ & $2\alpha n+1/\alpha$\\
  \hline
$\alpha$-threshold $\ \alpha>\frac{1}{2}$ & $1$ & $n$\\
  \hline
  majority model & $1$ & $\lfloor n/2\rfloor+1$~(Thm.~\ref{general:majoritythm1})\\
    \hline
$r$-threshold $\ r=1$& $1$~(Thm.~\ref{general:thm8}) & $1$~(Thm.~\ref{general:thm8})\\
 \hline
$r$-threshold $\ r=2$& $2$~(Thm.~\ref{general:thm8}) & $\frac{n}{1+x}$~(Thm.~\ref{general:thm8})\\
  \hline
$r$-threshold $\ r\ge 3$ & $r$~(Thm.~\ref{general:thm8}) & $n$~(Thm.~\ref{general:thm8})\\
 \hline
\end{tabular}
\caption{The minimum size of an eternal set.\label{general:Table 2Eternal}}
\end{table}

To give the reader a better understanding, let us explain some of the bounds in the above tables, which are quite straightforward to prove. The minimum size of an eternal set in the $\alpha$-threshold model on a graph $G$ is trivially at least 1 for any $0<\alpha<1$. This trivial bound is tight since in the $\alpha$-threshold model on the star graph $S_n$, the internal node is an eternal set of size 1, regardless of the value of $\alpha$. Furthermore, it is easy to observe that the upper bound of $\lfloor n/2\rfloor+1$ is tight for the complete graph $K_n$. Consider the majority model on an arbitrary graph $G=(V,E)$. To prove the above upper bound, we consider a partition of the node set into two subsets $V_1$ and $V_2$ such that $\lfloor n/2\rfloor-1\le |V_2|\le|V_1|\le \lfloor n/2\rfloor+1$ and the number of edges in between is minimized. In Theorem~\ref{general:majoritythm1}, we show that each node in $V_1$ has at least half of its neighbors in $V_1$, which implies that $V_1$ is a robust set of size at most $\lfloor n/2\rfloor+1$.

\paragraph{Outline.} First we provide some basic definitions and introduce our models formally in Section~\ref{preliminaries}. Then, we study the stabilization time in our five models in Section~\ref{general:stabilization time}, where we overview prior work and provide new results. Finally as our main contribution, we present several tight bounds, in terms of the number of nodes and edges, on the minimum size of a dynamo in Section~\ref{general:dynamo} and on the minimum size of a robust set and an eternal set in Section~\ref{general:robust}.

\section{Preliminaries}
\label{preliminaries}
\paragraph{Graph Definitions.}Let $G=\left(V,E\right)$ be a graph. We always assume that $n$ and $m$ denote the number of nodes and edges in $G$, respectively. Furthermore, it is assumed that $G$ is connected; otherwise we point it out explicitly. 

For a node $v\in V$, $\Gamma\left(v\right):=\{u\in V: \{u,v\} \in E\}$ is the \emph{neighborhood} of $v$. For a set $S\subset V$, we define $\Gamma\left(S\right):=\bigcup_{v\in S}\Gamma\left(v\right)$ and $\Gamma_S\left(v\right):=\Gamma\left(v\right)\cap S$. Moreover, $d\left(v\right):=|\Gamma\left(v\right)|$ is the \emph{degree} of $v$ and $d_S\left(v\right):=|\Gamma_S\left(v\right)|$. We also define $\delta\left(G\right)$ to be the minimum degree in graph $G$. (To simplify notation, we sometimes shortly write $\delta$ where $G$ is clear form the context). Moreover, for two nodes $u,v \in V$, let $d\left(u,v\right)$ denote the length of the shortest path between $v,u$ in terms of the number of edges, which is called the \emph{distance} between $v$ and $u$ (for a node $v$, we define $d\left(v,v\right)=0$). Furthermore, for a node set $S\subset V$ we define the \emph{edge boundary} of $S$ to be $\partial\left(S\right):=\{\{u,v\}:v\in S,	 u\in V\setminus S\}$. 

\paragraph{Models.}For a graph $G=\left(V,E\right)$, a \emph{configuration} is a function $\mathcal{C}:V\rightarrow\{b,w\}$, where $b$ and $w$ represent black and white, respectively. If $\mathcal{C}$ is a constant function, we call it a \emph{monochromatic} configuration and \emph{bichromatic} otherwise. For a node $v\in V$, the set $\Gamma_a^{\mathcal{C}}\left(v\right):=\{u\in \Gamma\left(v\right):\mathcal{C}\left(u\right)=a\}$ includes the neighbors of $v$ which have color $a\in\{b,w\}$ in configuration $\mathcal{C}$. For a set $S\subset V$, we define $\Gamma_a^{\mathcal{C}}\left(S\right):=\bigcup_{v\in S} \Gamma_a^{\mathcal{C}}\left(v\right)$. Furthermore, we write $\mathcal{C}|_S=a$ if $\mathcal{C}\left(u\right)=a$ for every $u\in S$.

Assume that we are given an initial configuration $\mathcal{C}_0$ on a graph $G=\left(V,E\right)$. In a model $M$, $\mathcal{C}_t\left(v\right)$ for $t\ge 1$ and $v\in V$, which is the color of node $v$ in the $t$-th configuration, is determined based on a predefined updating rule. We are mainly interested in the five following models, where $\mathcal{C}_t\left(v\right)$ is defined by a deterministic updating rule as a function of $\mathcal{C}_{t-1}\left(u\right)$ for $u\in \Gamma\left(v\right)\cup \{v\}$. 

In the \emph{$r$-threshold model} for some integer $r\ge 1$
\[
\mathcal{C}_t\left(v\right) =\left\{\begin{array}{lll}b &\mbox{if $|\Gamma^{\mathcal{C}_{t-1}}_b\left(v\right)|\ge r$} \\
w,&\mbox{otherwise}
\end{array}\right.
\]
for any $t\ge 1$ and node $v\in V$. In words, node $v$ is black in the $t$-th configuration if and only if it has at least $r$ black neighbors in the previous configuration.

The \emph{$r$-monotone model} is the same as the $r$-threshold model except that a black node remains black forever. More formally, for a node $v$ we have $\mathcal{C}_t\left(v\right)=b$ if and only if $|\Gamma_b^{\mathcal{C}_{t-1}}\left(v\right)|\ge r$ or $\mathcal{C}_{t-1}\left(v\right)=b$. 

In the \emph{$\alpha$-threshold model} for some constant $0<\alpha<1$
\[
\mathcal{C}_t\left(v\right) =\left\{\begin{array}{lll}b &\mbox{if $|\Gamma^{\mathcal{C}_{t-1}}_b\left(v\right)|\ge \alpha d\left(v\right)$} \\
w,&\mbox{otherwise}
\end{array}\right.
\]
for any $t\ge 1$ and node $v\in V$.

The \emph{$\alpha$-monotone model} is the same as the $\alpha$-threshold model except that a black node remains black forever. More formally, for a node $v$ we have $\mathcal{C}_t\left(v\right)=b$ if and only if $|\Gamma_b^{\mathcal{C}_{t-1}}\left(v\right)|\ge \alpha d\left(v\right)$ or $\mathcal{C}_{t-1}\left(v\right)=b$. 

In the \emph{majority model} for any $t\ge 1$ and node $v\in V$ 
\[
\mathcal{C}_t\left(v\right) =\left\{\begin{array}{lll}\mathcal{C}_{t-1}\left(v\right), &\mbox{if $|\Gamma^{\mathcal{C}_{t-1}}_b\left(v\right)|=|\Gamma^{\mathcal{C}_{t-1}}_w\left(v\right)|$}, \\
\argmax_{a\in\{b,w\}} |\Gamma^{\mathcal{C}_{t-1}}_a\left(v\right)|,&\mbox{otherwise.}
\end{array}\right.
\]

In each of the above models, we define $B_t$ and $W_t$ for $t\ge 0$ to be the set of black and white nodes in the $t$-th configuration. Furthermore, let $b_t:=|B_t|$ and $w_t:=|W_t|$ denote the number of black and white nodes in the $t$-the configuration.

\paragraph{Assumptions.}We assume that $r$ and $\alpha$ are fixed while we let $n$ tend to infinity. Furthermore, note that if $d\left(v\right)<r$ for a node $v$ and it is initially white, it never becomes black in the $r$-threshold and $r$-monotone model. Thus, we always assume that $r\le \delta\left(G\right)$.

Consider a model $M$ on a graph $G=\left(V,E\right)$. We define the three following concepts.
\begin{definition}[dynamo]
A node set $S\subseteq V$ is a \emph{dynamo} whenever the following holds: If all nodes in $S$ are black in some configuration, then black color takes over.  
\end{definition}
\begin{definition}[eternal set]
A node set $S\subseteq V$ is an \emph{eternal set} whenever the following holds: If all nodes in $S$ are black in some configuration, then black color survives.  
\end{definition}
\begin{definition}[robust set]
A non-empty node set $S\subseteq V$ is a \emph{robust set} whenever the following holds: If all nodes in $S$ are black in some configuration, then they will remain black forever.  
\end{definition}
If a set $S$ is a robust set, then it is an eternal set by definition, but not necessarily the other way around. 

A node set $S\subseteq V$ is a \emph{monotone dynamo} if it is a dynamo and takes over monotonically. That is, if the set of black nodes is equal to $S$ in some configuration, then this set keeps growing until black color takes over. More formally, if $B_t=S$ for some $t\ge 0$, then $B_{t'}\subseteq B_{t'+1}$ for any $t'\ge t$ and $B_{t^{\prime\prime}}=V$ for some $t^{\prime\prime}\ge t$. In the $r$-monotone and $\alpha$-monotone model, any dynamo is a monotone dynamo, but this is not true in the other three models.

In the $r$-monotone and $\alpha$-monotone model, any non-empty set $S$ is robust since a black node remains black forever. However, a non-empty node set $S$ is a robust set in the $r$-threshold model (resp. $\alpha$-threshold model) if and only if for each node $v\in S$, $d_S\left(v\right)\ge r$ (resp. $d_S\left(v\right)\ge \alpha d\left(v\right)$).

Now, we set up some notations for the minimum size of a (monotone) dynamo, a robust set, and an eternal set. For a graph $G=\left(V,E\right)$, we define the following notations:
\begin{itemize}
\item $MD\left(G,r\right):=$ minimum size of a dynamo in $r$-monotone
\item $\overleftarrow{MD}\left(G,r\right):=$ minimum size of a dynamo in $r$-threshold
\item $MR\left(G,r\right):=$ minimum size of a robust set in $r$-monotone
\item $\overleftarrow{MR}\left(G,r\right):=$ minimum size of a robust set in $r$-threshold
\item $ME\left(G,r\right):=$ minimum size of an eternal set in $r$-monotone.
\item $\overleftarrow{ME}\left(G,r\right):=$ minimum size of an eternal set in $r$-threshold.
\end{itemize}
We analogously define $MD\left(G,\alpha\right)$, $MR\left(G,\alpha\right)$, $ME\left(G,\alpha\right)$ for the $\alpha$-monotone model, $\overleftarrow{MD}\left(G,\alpha\right)$, $\overleftarrow{MR}\left(G,\alpha\right)$, and $\overleftarrow{ME}\left(G,\alpha\right)$ for the $\alpha$-threshold model, and $\overleftarrow{MD}\left(G,maj\right)$, $\overleftarrow{MR}\left(G,maj\right)$, and $\overleftarrow{ME}\left(G,maj\right)$ for the majority model. We should mention that the reason for the use of the backward arrow $\leftarrow$ in the above notations for the $\alpha$-threshold, $r$-threshold, and majority models is that a white node which has become black, might turn white again.

Finally, we define $\overleftarrow{MD}_{mon}\left(G,r\right)$, $\overleftarrow{MD}_{mon}\left(G,\alpha\right)$, and $\overleftarrow{MD}_{mon}\left(G,maj\right)$ for the minimum size of a monotone dynamo respectively in the $r$-threshold, $\alpha$-threshold, and majority model. (Note that we skipped the $r$-monotone and $\alpha$-monotone model since each dynamo is a monotone dynamo in these two models.) 
\section{Stabilization Time and Periodicity}
\label{general:stabilization time}
In this section, we discuss several bounds on the stabilization time and periodicity in each of our five models. Let us start with the $r$-monotone model. The periodicity is always one. A trivial upper bound on the stabilization time is $n-r$ because if $b_0<r$ then no node changes its color and if $b_0\ge r$ in each round at least one white node becomes black until we reach a fixed configuration. This simple bound is actually tight. Consider graph $H=(V_H=\{v_1,\cdots,v_n\},E_H)$, where each node $v_i$ for $i\ge r+1$ is adjacent to $v_1,\cdots,v_{r-1}$ and $v_{i-1}$. See Figure~\ref{general:GeneralBoundsFig1} for an example with $r=3$. If in the $r$-monotone model on $H$ initially nodes $v_1,\cdots,v_r$ are black, then in the $i$-th round node $v_{i+r}$ becomes black until the whole graph is black. This takes $n-r$ rounds.
\begin{figure}[!ht]
\centering
\includegraphics[scale=1.0]{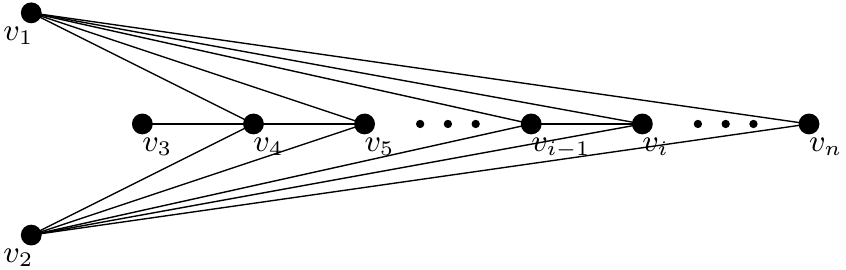}
\caption{The construction of graph $H$ for $r=3$.}
\label{general:GeneralBoundsFig1}
\end{figure}

It is also quite straightforward to prove that the stabilization time of the $r$-monotone model on a graph $G$ is bounded by the maximum length of a simple path in $G$. Consider an arbitrary initial configuration $\mathcal{C}_0$ for which the stabilization time is equal to $T$ for some $T\ge 0$. We show that $G$ has a simple path of length $T$. Let $B'_0=B_0$ and define $B'_t=B_t\setminus B_{t-1}$ to be the set of nodes which become black in the $t$-th round. Each node in $B'_t$ has at least one neighbor in $B'_{t-1}$; otherwise it would have become black before round $t$. This immediately implies that $G$ includes a simple path of length $T$ along sets $B'_t$ for $0\le t\le T$. Therefore, our claim follows. This simple bound is also tight, up to an additive constant. Again consider the construction of graph $H=(V_H,E_H)$ which we described above. (See Figure~\ref{general:GeneralBoundsFig1}) Since $v_1,\cdots, v_r$ create an independent set, a path can include at most two nodes among $v_1,\cdots, v_r$. Thus, Graph $H$ does not have any simple path of length $n-r+2$ or larger. On the other hand, if initially only $v_1,\cdots, v_r$ are black the process takes $n-r$ rounds. 

The above bound is not very interesting for dense graphs since they include long simple paths. For example, if $\delta\ge n/2$ for a graph $G$, then it has a simple path of length $n-1$ by Dirac's theorem. However, the stabilization time for very dense graphs such as the complete graph $K_n$ can be bounded by a constant. To address this observation, we prove in Theorem~\ref{general:stabiliztion:delta} an upper bound of form $\mathcal{O}(\frac{rn}{\delta})$, which is equivalent to $\mathcal{O}(\frac{n}{\delta})$ in our setting since we assume that $r$ is a constant.

\begin{theorem}
\label{general:stabiliztion:delta}
In the $r$-monotone model on a graph $G$, the stabilization time is bounded by $\mathcal{O}\left(\frac{rn}{\delta}\right)$.
\end{theorem}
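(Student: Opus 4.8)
The plan is to introduce a potential function that measures how close the still-white vertices are to being activated, and to show that in (almost) every round it must drop by roughly $\delta$. Let $T$ denote the stabilization time, and for $t\ge 0$ write $B'_t=B_t\setminus B_{t-1}$ (with $B'_0=B_0$) for the set of vertices activated in round $t$; thus $B'_t\neq\emptyset$ exactly for $1\le t\le T$ and $\sum_{t\ge 0}|B'_t|=b_T\le n$. I would define
\[
Z_t:=\sum_{v\in W_t}\max\!\bigl(0,\;r-d_{B_t}(v)\bigr).
\]
Two observations are immediate. First, a vertex that is still white has fewer than $r$ black neighbours, so each summand is at most $r$ and $Z_0\le r\,w_0\le rn$. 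Second, since $B_t\subseteq B_{t+1}$ and every vertex leaving $W_t$ had at least $r$ black neighbours (hence contributed $0$ to $Z_t$), the sequence $(Z_t)_{t\ge0}$ is non-increasing.

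The main step is a per-round lower bound on $Z_t-Z_{t+1}$ for $1\le t\le T-1$. Fix such a $t$. For a vertex $v\in W_{t+1}$ we have $d_{B_t}(v)\le r-1$, and a short computation shows that $v$'s contribution to $Z_t-Z_{t+1}$ is exactly $\min\!\bigl(r-d_{B_t}(v),\,|\Gamma(v)\cap B'_{t+1}|\bigr)$, which is at least $1$ as soon as $v$ has a neighbour in $B'_{t+1}$; hence $Z_t-Z_{t+1}\ge|W_{t+1}\cap\Gamma(B'_{t+1})|$. Now I would bring in the minimum degree: pick any $v^*\in B'_{t+1}$. Since $v^*$ was white in round $t$ it had at most $r-1$ black neighbours already in round $t-1$, so $d_{W_{t-1}}(v^*)\ge\delta-r+1$; and at most $|B'_t|+|B'_{t+1}|$ of these neighbours belong to $B_{t+1}\setminus B_{t-1}=B'_t\cup B'_{t+1}$, so that
\[
Z_t-Z_{t+1}\ \ge\ |W_{t+1}\cap\Gamma(v^*)|\ =\ d_{W_{t+1}}(v^*)\ \ge\ \delta-r+1-|B'_t|-|B'_{t+1}|
\]
(the inequality being vacuous when the right-hand side is non-positive).

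Summing this over $t=1,\dots,T-1$ and using $\sum_{t=1}^{T-1}\bigl(|B'_t|+|B'_{t+1}|\bigr)\le 2\sum_{t\ge0}|B'_t|\le 2n$ together with $0\le Z_T\le Z_0\le rn$ gives $(T-1)(\delta-r+1)\le(r+2)n$. Since $r$ is a constant and $r\le\delta$, this is the desired bound: if $\delta\ge 2r$ then $\delta-r+1>\delta/2$ and $T\le 1+\tfrac{2(r+2)n}{\delta}=\mathcal{O}(rn/\delta)$, whereas if $\delta<2r$ then $rn/\delta>n/2$ and already the trivial bound $T\le n-r$ is $\mathcal{O}(rn/\delta)$.

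I expect the genuinely delicate part to be the per-round estimate of $Z_t-Z_{t+1}$, because it has to combine three effects simultaneously: that each surviving white vertex is still ``short'' of the threshold by at least one black neighbour, that the freshly activated layer $B'_{t+1}$ touches many white vertices precisely because every vertex has degree at least $\delta$, and that this white neighbourhood cannot have been eroded by more than $|B'_t|+|B'_{t+1}|$ vertices in the two preceding rounds. The bookkeeping around the first and last rounds, and the check that none of the intermediate inequalities become vacuous in degenerate cases (e.g.\ $W_{t+1}=\emptyset$ or $T\le1$), needs some care but introduces nothing new.
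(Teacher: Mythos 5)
Your argument is correct, and it is a genuinely different proof from the one in the paper. The paper proceeds by contradiction with a purely combinatorial counting scheme: it groups the activation layers $B'_t$ into consecutive triples, calls a triple light if it contains at most $\delta/3$ vertices, bounds the number of \emph{jump edges} (edges joining layers more than one round apart) by $(r-1)n$ using the fact that a vertex activated in round $t$ has at most $r-1$ black neighbours among the layers before round $t-1$, and then, assuming $T\ge 18rn/\delta$, finds a light triple incident to fewer than $2\delta/3$ jump edges, so that a vertex in its middle layer would have degree below $\delta$ --- a contradiction. You use the same structural fact (your witness $v^*\in B'_{t+1}$ satisfies $d_{B_{t-1}}(v^*)\le r-1$) but feed it into a deficiency potential $Z_t=\sum_{v\in W_t}\max\left(0,r-d_{B_t}(v)\right)$: your per-round computation (the contribution of $v\in W_{t+1}$ to $Z_t-Z_{t+1}$ equals $\min\left(r-d_{B_t}(v),|\Gamma(v)\cap B'_{t+1}|\right)$, hence the drop is at least $d_{W_{t+1}}(v^*)\ge\delta-r+1-|B'_t|-|B'_{t+1}|$) is correct, including the degenerate cases where the right-hand side is non-positive, and telescoping against $Z_0\le rn$ and $\sum_t|B'_t|\le n$ gives $(T-1)(\delta-r+1)\le(r+2)n$; combined with the trivial bound $T\le n-r$ when $\delta<2r$, this yields $T=\mathcal{O}(rn/\delta)$ as claimed. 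What each approach buys: yours is a direct inequality rather than a contradiction, needs no triple/jump-edge bookkeeping, and gives a slightly better explicit constant (roughly $2(r+2)n/\delta$ versus $18rn/\delta$); the paper's argument avoids any potential function and works only with cardinalities of layers and edge counts, which is why it was presumably preferred for brevity of exposition.
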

\begin{proof}
Assume that there is an initial configuration $\mathcal{C}_0$ for which the stabilization time is at least $18 r n/\delta$. Similar to above, we define $B'_0=B_0$ and $B'_t=B_t\setminus B_{t-1}$ for $t\ge 1$. Arrange the first $18 r n/\delta$ $B'_t$s into $6rn/\delta$ triples $(B'_{3j+1},B'_{3j+2},B'_{3j+3})$ for $0\le j\le 6rn/\delta-1$. We call a triple \emph{heavy} if it contains more than $\delta/3$ nodes, and \emph{light} otherwise. We observe that at least $3rn/\delta$ of the triples are light because otherwise the graph includes more than $(3rn/\delta)\cdot (\delta/3)=rn$ nodes.

We call an edge $\{v,u\}$ for $v\in B'_{t_1}$ and $u\in B'_{t_2}$ a \emph{jump edge} if $|t_1-t_2|>1$. Each node $w\in B'_t$ is adjacent to at most $r-1$ nodes in $\bigcup_{t'=0}^{t-2}B'_{t'}$ because otherwise it becomes black before the $t$-th round (which is in contradiction with the definition of $B'_t$). Thus, the number of jump edges is upper-bounded by $(r-1)n<rn$.

So far we proved that there are at least $3rn/\delta$ light triples and less than $rn$ jump edges. Therefore, there must exist a light triple $(B'_{3j'+1}, B'_{3j'+2}, B'_{3j'+3})$ which is incident to less than $(2rn)/(3rn/\delta)=2\delta/3$ jump edges. Let $v$ be an arbitrary node in $B'_{3j'+2}$; it is adjacent to at most $\delta/3$ nodes in the triple (since it is a light triple) and it is incident to less than $2\delta/3$ jump edges. This implies that $v$'s degree is less than $2\delta/3+\delta/3=\delta$, which is a contradiction. Therefore, the stabilization time is less than $18 r n/\delta$ for any initial configuration. \qed
\end{proof}

In the $\alpha$-monotone model, trivially the periodicity is one, as in the $r$-monotone model, but there is not much known about the stabilization time, which would be interesting to study in future research. 
\begin{theorem}[\cite{goles1980periodic}]
\label{Goles} In the $r$-threshold, $\alpha$-threshold, and majority model, the periodicity is always one or two.
\end{theorem}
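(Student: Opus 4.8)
The standard route is the "energy" (Lyapunov) function argument. Since the $r$-threshold, $\alpha$-threshold, and majority models all have updating rules of the linear-threshold type, I would treat them uniformly by writing the rule as $\mathcal{C}_t(v)=b$ iff $\sum_{u\in\Gamma(v)} x_{t-1}(u) \ge \theta_v$, where $x_t(u)\in\{0,1\}$ encodes the color and $\theta_v$ is the (node-dependent) threshold; for the $r$-threshold model $\theta_v=r$, for the $\alpha$-threshold model $\theta_v=\alpha d(v)$, and for the majority model $\theta_v=d(v)/2$ with the tie case handled separately. The key observation is that the dynamics is \emph{parallel} (synchronous) on an undirected graph, so the natural quadratic form $\sum_{\{u,v\}\in E} x_t(u)x_{t-1}(v)$ is symmetric in the two time slices, and one can compare consecutive rounds.

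Concretely, I would define the functional
\[
E_t \;=\; -\sum_{\{u,v\}\in E} x_t(u)\,x_{t-1}(v) \;+\; \sum_{v\in V}\theta_v\bigl(x_t(v)+x_{t-1}(v)\bigr),
\]
(suitably rescaled, with a factor that makes the threshold terms match the edge terms), and show that $E_{t+1}\le E_t$ for every $t$, with equality only when the configuration has already entered a $2$-cycle. The computation is the crux: writing $\Delta = E_{t+1}-E_t$ and grouping terms by node, one gets $\Delta = -\sum_{v}\bigl(x_{t+1}(v)-x_{t-1}(v)\bigr)\bigl(\sum_{u\in\Gamma(v)} x_t(u) - \theta_v\bigr)$, and the updating rule forces each summand to be $\le 0$: if $x_{t+1}(v)=1$ then $\sum_{u}x_t(u)\ge\theta_v$, and if $x_{t+1}(v)=0$ then $\sum_{u}x_t(u)<\theta_v$, so the sign of $x_{t+1}(v)-x_{t-1}(v)$ always opposes (or matches with a zero factor) the sign of $\sum_u x_t(u)-\theta_v$. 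Hence $E_t$ is nonincreasing; since it takes values in a finite set (indeed a bounded set of rationals with bounded denominator, once one clears the $d(v)$ in the $\alpha$ case), it is eventually constant. When $\Delta=0$, every term vanishes, which forces $x_{t+1}(v)=x_{t-1}(v)$ for all $v$ (the factor $\sum_u x_t(u)-\theta_v$ is never exactly zero in the $r$-threshold case since $r$ is an integer; the $\alpha$-threshold and majority cases need a small separate argument for nodes sitting exactly at the threshold, using the tie-breaking convention). Therefore $\mathcal{C}_{t+1}=\mathcal{C}_{t-1}$ eventually, i.e.\ the periodicity is $1$ or $2$.

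The main obstacle is handling the boundary cases cleanly: the majority model's tie-breaking rule ($\mathcal{C}_t(v)=\mathcal{C}_{t-1}(v)$ on a tie) is not literally a fixed-threshold rule, and the $\alpha$-threshold model allows $\alpha d(v)$ to be an integer, so a node can sit exactly on its threshold. In both situations the naive "strict inequality" argument for $\Delta<0$ degenerates, and one must check that even then $E_t$ does not decrease spuriously and that $\Delta=0$ still implies $x_{t+1}(v)=x_{t-1}(v)$ — for the majority model this uses precisely that on a tie the node copies its \emph{own} previous value, which is exactly the value two steps back once we are on a constant-energy stretch. A clean way to absorb all three cases is to perturb the thresholds infinitesimally (replace $\theta_v$ by $\theta_v-\varepsilon_v$ for generic small $\varepsilon_v$ chosen consistently with the tie-breaking) so that no node ever lies exactly on its threshold, run the argument, and then check the perturbation does not change the orbit. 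Since the theorem is quoted from~\cite{goles1980periodic}, I would likely just cite it; but if a self-contained proof is wanted, the above is the route.
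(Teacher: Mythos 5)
The paper does not prove this theorem; it is quoted directly from~\cite{goles1980periodic}, so there is no internal proof to compare against. Your sketch is precisely the standard Goles--Olivos energy (Lyapunov) argument from that reference, and the core of it is sound: with a symmetric weight structure the identity $\Delta=-\sum_{v}\bigl(x_{t+1}(v)-x_{t-1}(v)\bigr)\bigl(\sum_{u\in\Gamma(v)}x_t(u)-\theta_v\bigr)$ holds, each summand is nonpositive by the update rule, and monotonicity plus quantization of the energy forces $\mathcal{C}_{t+1}=\mathcal{C}_{t-1}$ eventually.

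Two details need repair. First, your remark that the factor $\sum_{u}x_t(u)-\theta_v$ "is never exactly zero in the $r$-threshold case since $r$ is an integer" is backwards: exact equality happens precisely when the threshold is an attainable integer, e.g.\ $\sum_u x_t(u)=r$, and then a node with $x_{t+1}(v)=1\neq x_{t-1}(v)=0$ contributes zero, so constant energy does not immediately give a $2$-cycle. The uniform fix is the perturbation you mention: replace $\theta_v=r$ by $r-\tfrac12$ (and $\alpha d(v)$ by $\alpha d(v)-\varepsilon$ with $\varepsilon$ below the minimal gap to an integer); since neighborhood sums are integers this leaves the orbit unchanged and makes every node with $x_{t+1}(v)\neq x_{t-1}(v)$ contribute a strictly negative, quantized amount. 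Second, for the majority model a fixed threshold perturbation cannot encode the tie-breaking rule (it would turn ties into "always black," i.e.\ the $\alpha$-threshold rule with $\alpha=1/2$), and your alternative argument that on a constant-energy stretch a tied node copies "the value two steps back" is not immediate, since the tie at time $t$ relates $x_{t+1}(v)$ to $x_t(v)$, not to $x_{t-1}(v)$. The clean and standard repair is a state-dependent perturbation: in the $\pm1$ encoding add a self-loop of weight $\tfrac12$ at every node, which exactly reproduces the tie-break, keeps the weight matrix symmetric so the same $\Delta$ computation applies, and guarantees no node ever sits at threshold. With these adjustments your proposed proof goes through and is the same argument as in the cited source.
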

A monochromatic configuration is an example of periodicity one. For periodicity two, consider a complete bipartite graph $G=\left(V_1\cup V_2,E\right)$, where $|V_1|=|V_2|$. If initially all nodes in $V_1$ are black, the process keeps alternating between two configurations in all of the three models, regardless of the choice of $r$ and $\alpha$.
\begin{theorem}[\cite{fogelman1983transient}]
\label{fogelman}
In the $r$-threshold, $\alpha$-threshold, and majority model on a graph $G$, the stabilization time is bounded by $\mathcal{O}\left(m\right)$.
\end{theorem}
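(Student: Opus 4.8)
This is the classical energy (Lyapunov) argument behind Theorem~\ref{Goles} (Goles--Olivos, Fogelman--Goles--Weisbuch, \cite{goles1980periodic,fogelman1983transient}). The plan is to treat all three models at once as \emph{synchronous threshold dynamics with a symmetric weight matrix}, to exhibit a functional $E_t$ on consecutive pairs of configurations that is non-increasing, drops by at least a fixed amount at every round that is not yet on the limit cycle, and whose total range is $\mathcal{O}(m)$.

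\emph{A common encoding.} Encode black/white by $+1/-1$, write $s^{(t)}\in\{-1,+1\}^V$ for $\mathcal{C}_t$, and let $A$ be the adjacency matrix of $G$. A node $v$ has $\tfrac{1}{2}\big(d(v)+(A s^{(t-1)})_v\big)$ black neighbours, so in the $r$-threshold model $s^{(t)}_v=\mathrm{sign}\big((A s^{(t-1)})_v-(2r-d(v))\big)$, and in the $\alpha$-threshold model the same holds with $r$ replaced by the integer $\lceil\alpha d(v)\rceil\le d(v)+1$; both give a symmetric weight matrix $W=A$ and a threshold vector $\theta$ with $|\theta_v|=\mathcal{O}(d(v))$. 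The majority rule is not literally a threshold function because of the keep-on-tie clause, but adding a self-loop of weight $\tfrac{1}{2}$ at every node repairs this: since $(A s^{(t-1)})_v\equiv d(v)\pmod 2$, the field $(A s^{(t-1)})_v+\tfrac{1}{2}s^{(t-1)}_v$ realises ``strict majority wins, ties keep the current colour'' exactly, with $W=A+\tfrac{1}{2}I$ symmetric and $\theta=0$. In all three cases, after perturbing each threshold by $\pm\tfrac{1}{2}$ (which changes nothing, as all local fields are integers) we arrive at $s^{(t)}=\mathrm{sign}(W s^{(t-1)}-\theta)$ with $W$ symmetric and with $(W s^{(t)})_v-\theta_v$ a nonzero half-integer for every $v$ and every $t$.

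\emph{The energy.} On the pair $(s^{(t-1)},s^{(t)})$ define
\[
E_t\;=\;-\,(s^{(t)})^{\top} W\, s^{(t-1)}\;+\;\theta^{\top}\big(s^{(t)}+s^{(t-1)}\big).
\]
Using the symmetry of $W$, a one-line manipulation gives
\[
E_{t+1}-E_t\;=\;-\big(W s^{(t)}-\theta\big)^{\top}\big(s^{(t+1)}-s^{(t-1)}\big)\;=\;-\sum_{v\in V}\big(W s^{(t)}-\theta\big)_v\big(s^{(t+1)}_v-s^{(t-1)}_v\big),
\]
and since $\mathrm{sign}\big((W s^{(t)}-\theta)_v\big)=s^{(t+1)}_v$, each summand equals $|(W s^{(t)}-\theta)_v|\,(1-s^{(t+1)}_v s^{(t-1)}_v)\ge 0$; hence $E_t$ is non-increasing. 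Moreover $|(W s^{(t)}-\theta)_v|\ge\tfrac{1}{2}$ and $s^{(t+1)}_v-s^{(t-1)}_v\in\{-2,0,2\}$, so every nonzero summand is at least $1$. Thus either $s^{(t+1)}=s^{(t-1)}$ — in which case determinism forces the process to alternate between $s^{(t-1)}$ and $s^{(t)}$ from round $t-1$ on, i.e. it has reached its cycle (re-proving periodicity $\le 2$) — or $E_{t+1}\le E_t-1$.

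\emph{Counting and the obstacle.} Finally, $|E_t|\le\sum_{u\ne v}|w_{uv}|+\sum_v|w_{vv}|+2\sum_v|\theta_v|=2m+\mathcal{O}(n)+\mathcal{O}(m)=\mathcal{O}(m)$, using $\sum_v d(v)=2m$, the standing assumption $r\le\delta\le d(v)$, and $n\le m+1$ for connected $G$. Since $E$ can strictly decrease during at most $\mathcal{O}(m)$ rounds, the cycle is reached within $\mathcal{O}(m)$ rounds, which is the asserted bound. The main thing to get right is the majority case: one must verify that the weight-$\tfrac{1}{2}$ self-loop reproduces the keep-on-tie rule for \emph{every} parity of $d(v)$, keeps $W$ symmetric so the identity for $E_{t+1}-E_t$ survives, and preserves the ``no exact tie'' property $(W s^{(t)})_v\ne\theta_v$ — which is precisely what makes ``$E$ stops decreasing'' equivalent to ``$s^{(t+1)}=s^{(t-1)}$'' and thereby pinpoints the first round on the limit cycle; getting the constant in the per-round decrease (and thus the hidden constant in $\mathcal{O}(m)$) also hinges on this half-integrality.
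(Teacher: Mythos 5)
Your proposal is correct: the paper does not prove this statement at all (it is quoted from Fogelman--Goles--Weisbuch \cite{fogelman1983transient}), and your argument is precisely the classical two-step energy/Lyapunov proof from that literature, including the standard devices of half-integer threshold perturbation and the weight-$\tfrac{1}{2}$ self-loop to encode the keep-on-tie rule of the majority model. The symmetry computation for $E_{t+1}-E_t$, the per-round decrease of at least $1$ off the limit cycle, and the $\mathcal{O}(m)$ bound on the range of $E$ are all handled correctly, so nothing further is needed.
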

\section{Dynamo}
\label{general:dynamo}
Here, we aim to prove tight bounds on the minimum size of a dynamo in all our five models. In Section~\ref{general:numberofnodes}, we prove several bounds in terms of the number of nodes in the underlying graph, which are summarized in Table~\ref{general:Table 1}. Then, we present some other bounds with regard to the number of edges of the graph in Section~\ref{general:nymberofedges} 
\subsection{Number of Nodes}
\label{general:numberofnodes}
As a warm-up, let us bound the minimum size of a dynamo for some specific classes of graphs in Lemma~\ref{general:lemma}, which actually come in handy several times later for arguing the tightness of our bounds.
\begin{lemma}
\label{general:lemma}
For the complete graph $K_n$
\[
MD\left(K_n,r\right)=\overleftarrow{MD}\left(K_n,r\right)=r\ \ \text{and}\ \ MD\left(K_n,\alpha\right)\ge\lceil\alpha n\rceil-1.
\]
Furthermore, for an $r$-regular graph $G=\left(V,E\right)$ and $r\ge 2$
\[
\overleftarrow{MD}\left(G,r\right)=n.
\]
\end{lemma}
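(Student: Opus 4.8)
I will prove the three claims in turn, with the last one---that $\overleftarrow{MD}(G,r)=n$ for every $r$-regular graph with $r\ge 2$---as the main target. For $K_n$ in the $r$-monotone model, the lower bound $\overleftarrow{MD}(K_n,r)\ge MD(K_n,r)\ge r$ follows from the trivial observation already noted in the introduction: a white node has exactly $n-1\ge r$ neighbours, but if fewer than $r$ nodes are ever black, no white node can reach the threshold, so the set of black nodes never grows. For the matching upper bound, take any $r$ nodes to be black initially; in $K_n$ every white node sees all $r$ of them and turns black in one round, so $B_1=V$, and in the $r$-threshold model it stays black since $r\le n-1$. Hence $MD(K_n,r)=\overleftarrow{MD}(K_n,r)=r$. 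For $MD(K_n,\alpha)\ge\lceil\alpha n\rceil-1$, observe that a node of $K_n$ has degree $n-1$, so the threshold is $\alpha(n-1)$; a white node becomes black only if it has at least $\alpha(n-1)$ black neighbours among the other $n-1$ nodes, i.e. if $b_t\ge\alpha(n-1)$, which is implied by $b_t\ge\lceil\alpha n\rceil-1$ being false to block growth---more carefully, if $b_0\le\lceil\alpha n\rceil-2$ then $b_0<\alpha(n-1)$ (for $n$ large, since $\alpha$ is a fixed constant $<1$), so no white node turns black and no black node is forced to change in the monotone model, so $B_0$ is not a dynamo unless $B_0=V$; this forces $MD(K_n,\alpha)\ge\lceil\alpha n\rceil-1$.

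\textbf{The regular-graph case.} Let $G=(V,E)$ be $r$-regular with $r\ge 2$, and suppose $S\subsetneq V$ is black in some configuration $\mathcal{C}_t$. I claim $S$ is not a dynamo; since the only candidate dynamo of size $n$ is $V$ itself (which is trivially a dynamo), this gives $\overleftarrow{MD}(G,r)=n$. The key is a potential-function / boundary argument on the edge boundary $\partial(B_t)$, combined with the fact that in the $r$-threshold model on an $r$-regular graph, a node is black at time $t+1$ \emph{if and only if} \emph{all} $r$ of its neighbours are black at time $t$ (threshold $r$ on degree exactly $r$). Equivalently, $B_{t+1}=\{v: \Gamma(v)\subseteq B_t\}$, so a node with even one white neighbour becomes (or stays) white. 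I will use this to show that once there is a white node, the black set can never reach $V$: consider any white node $u\in W_t$. It has some neighbour; trace a path and argue that whiteness "persists" along an appropriate structure. More robustly, I would track the quantity $m(B_t)=$ (number of edges inside $B_t$). Since each $v\in B_{t+1}$ has all $r$ neighbours in $B_t$, we get $r\,b_{t+1}=\sum_{v\in B_{t+1}}d(v)\le \sum_{v\in B_{t+1}}d_{B_t}(v)$, and one shows $\sum_{v\in B_{t+1}}d_{B_t}(v)\le 2m(B_t)$ with equality only when $B_{t+1}$ "uses up" all edges of $B_t$; iterating, $b_{t+1}\le 2m(B_t)/r$, and $m(B_t)\le \binom{b_t}{2}$ only when $B_t$ is a clique. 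The cleanest route: show directly that if $W_t\ne\emptyset$ then $W_{t+1}\ne\emptyset$, because any white node $u$ with a white neighbour keeps a white neighbour, and any white node all of whose neighbours are black has those black neighbours each possessing the white neighbour $u$, hence each such black neighbour turns white at time $t+1$ (it has the white neighbour $u$, violating "all neighbours black"); so $W_{t+1}$ is non-empty in every case. Therefore the process can never reach the all-black configuration from any proper subset, and $\overleftarrow{MD}(G,r)=n$.

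\textbf{Main obstacle.} The delicate point is the case analysis showing $W_{t+1}\ne\emptyset$ whenever $W_t\ne\emptyset$: I must handle (i) a white node with a white neighbour---then that neighbour may itself become black, so "persistence of a fixed white vertex" does not literally work and I instead argue that the \emph{edge} between two white nodes guarantees at least one endpoint stays white, which needs that $r\ge2$ so such configurations are not degenerate; and (ii) a white node all of whose neighbours are black, where I use the if-and-only-if characterisation to flip one of those neighbours white. The fact $r\ge2$ is essential: for $r=1$, a single black node already makes all its neighbours black in the $1$-threshold model and the statement fails (consistent with Table~\ref{general:Table 1}, where $\overleftarrow{MD}$ for $r=1$ is $1$ or $2$, not $n$). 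I expect to spend most of the effort making the "at least one white node survives each round" invariant airtight, after which the conclusion $\overleftarrow{MD}(G,r)=n$ is immediate.
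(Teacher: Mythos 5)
Your argument for the $r$-regular case is essentially the paper's: in the $r$-threshold model on an $r$-regular graph a node is black in round $t+1$ if and only if all $r$ of its neighbours are black in round $t$, so every neighbour of a white node is white in the next round, and hence a white node can never be eliminated. Note that this single observation already covers both of your cases at once --- if $u$ is white at time $t$, then every $v\in\Gamma(u)$ is white at time $t+1$, whether or not $u$ also had a white neighbour --- so the case analysis you flag as the ``main obstacle'' (and the unfinished potential-function digression) is unnecessary; the paper states the argument in exactly this one-line form. Your lower bounds $MD(K_n,r)\ge r$ and $MD(K_n,\alpha)\ge\lceil\alpha n\rceil-1$ also match the paper's reasoning.

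There is, however, one step that fails as written: your upper bound $\overleftarrow{MD}(K_n,r)\le r$. Starting from $r$ black nodes in $K_n$, each initially black node has only $r-1$ black neighbours, so in the $r$-threshold model it turns \emph{white} in round $1$; thus $B_1=V\setminus B_0$ (of size $n-r$), not $V$, and black nodes do not ``stay black since $r\le n-1$'' --- there is no persistence of black in the threshold model, which is precisely the subtlety distinguishing it from the monotone model. The conclusion survives with one extra round, which is what the paper does: in round $1$ the $n-r$ originally white nodes become black, and in round $2$ every node has at least $n-r-1\ge r$ black neighbours (using that $r$ is fixed while $n$ is large), so the graph becomes fully black. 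With this two-round correction, together with the inequality $MD(K_n,r)\le\overleftarrow{MD}(K_n,r)$ that you implicitly use, your proof coincides with the paper's.
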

\begin{proof}
Firstly, $r\le MD\left(K_n,r\right)$ because by starting from a configuration with less than $r$ black nodes in the $r$-monotone model, all nodes keep their color unchanged forever. Secondly, $\overleftarrow{MD}\left(K_n,r\right)\le r$ because from a configuration with $r$ black nodes in the $r$-threshold model, in the next round all the $n-r$ white nodes become black and after one more round all nodes will be black because $n-r$ is at least $r+1$ (recall that we assume that $r$ is fixed while $n$ tends to infinity). By these two statements and the fact that $MD\left(K_n,r\right)\le\overleftarrow{MD}\left(K_n,r\right)$ (this is true since a dynamo in the $r$-threshold model is also a dynamo in the $r$-monotone model), we have $MD\left(K_n,r\right)=\overleftarrow{MD}\left(K_n,r\right)=r$. In the $\alpha$-monotone model on $K_n$, by starting with less than $\lceil \alpha \left(n-1\right)\rceil$ black nodes, all white nodes stay white forever, which implies that $MD\left(K_n,\alpha\right)\ge \lceil \alpha \left(n-1\right)\rceil\ge \lceil\alpha n\rceil-1$. (The interested reader might try to find the exact value of $MD\left(K_n,\alpha\right)$ as a small exercise.) 

Consider an arbitrary configuration with at least one white node, say $v$, in the $r$-threshold model on an $r$-regular graph $G$. In the next round all nodes in $\Gamma\left(v\right)$ will be white. Thus, by starting from any configuration except a fully black one, black color never takes over. This implies that $\overleftarrow{MD}\left(G,r\right)=n$. (We exclude $r=1$ because a $1$-regular graph is disconnected for large $n$.) \qed
\end{proof}
\subsubsection{$\mathbf{\alpha}$-Monotone Model}
For the $\alpha$-monotone model on a graph $G$, Chang~\cite{chang2011triggering} proved that $MD(G,\alpha)\le \left(2\sqrt{2}+3\right)\alpha n<5.83\alpha n$. The tighter bound of $4.92\alpha n$ was provided in~\cite{chang2012triggering}. Finally, Garbe, Mycroft, and McDowell~\cite{garbe2018contagious} proved that $MD(G,\alpha)\le 2\alpha n$. We do not know whether this bound is tight or not.

Furthermore, trivially we have that $1\le MD\left(G,\alpha\right)$. We argue in Observation~\ref{general:thm1} that this bound is actually best possible.
\begin{observation}
\label{general:thm1}
For the star graph $S_n$, $MD(S_n,\alpha)=1$.
\end{observation}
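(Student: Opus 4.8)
The plan is to exhibit a single black node that constitutes a dynamo in the $\alpha$-monotone model on the star graph $S_n$. Recall that $S_n$ consists of one internal node $c$ adjacent to $n-1$ leaves $\ell_1,\dots,\ell_{n-1}$, each leaf having degree $1$ and $c$ having degree $n-1$. The natural candidate for a dynamo of size $1$ is the internal node $c$.

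The argument proceeds in two short steps. First I would observe that if $c$ is black at some round, then each leaf $\ell_i$ has its unique neighbor, namely $c$, black; hence the black fraction of $\ell_i$'s neighborhood is $1 \ge \alpha$, so $\ell_i$ becomes black in the next round (and stays black, since the model is monotone). Thus within one round the entire graph is black, which means $\{c\}$ is a dynamo and $MD(S_n,\alpha)\le 1$. Second, since a dynamo must be non-empty, $MD(S_n,\alpha)\ge 1$ trivially (alternatively this is the generic lower bound mentioned just before the statement). Combining the two inequalities gives $MD(S_n,\alpha)=1$.

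There is essentially no obstacle here: the only thing to be slightly careful about is the degenerate behaviour of the leaves, but since $0<\alpha<1$ and each leaf has exactly one neighbor, the condition $|\Gamma_b(\ell_i)|\ge \alpha d(\ell_i)$ reduces to $1 \ge \alpha$, which always holds. I would also note for completeness that starting from $\{c\}$ black, the configuration reached after one round is monochromatic black, so the process has indeed stabilized with black taking over, matching the definition of a dynamo. No auxiliary lemma or potential-function argument is needed.
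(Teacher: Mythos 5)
Your proposal is correct and takes essentially the same approach as the paper: the internal node is a dynamo since each leaf's unique neighbor is black (so the fraction $1\ge\alpha$ forces it black in one round), combined with the trivial lower bound $1\le MD(S_n,\alpha)$ noted just before the observation.
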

Consider the $\alpha$-monotone model on $S_n$. Assume that initially the internal node is black. In the next round, all leaves will become black since they only see black color in their neighborhood and the internal node remains black.

\subsubsection{$\mathbf{\alpha}$-Threshold and Majority Model}
In this section, we bound the minimum size of a dynamo in the $\alpha$-threshold and majority model. We consider these two models together since some of our arguments can be applied to both of them. 
\paragraph{Lower Bounds.}For $\alpha\le 1/2$, we prove that the trivial lower bound of $1\le \overleftarrow{MD}(G,\alpha)$ is tight.
\begin{lemma}
\label{general:obs1}
$\overleftarrow{MD}\left(C_n,\alpha\right)=1$ for $\alpha\le \frac{1}{2}$ and odd $n$. 
\end{lemma}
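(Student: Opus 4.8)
The plan is to exploit the fact that on $C_n$, which is $2$-regular, the $\alpha$-threshold rule with $\alpha\le\frac12$ collapses to the simplest possible local rule: since every node has degree $2$ and $2\alpha\le 1$, a node is black in round $t$ if and only if it has \emph{at least one} black neighbour in round $t-1$ (zero black neighbours never suffices because $\alpha>0$, and one always does because $1\ge 2\alpha$). So the first step is to record this reduction. I would also note that this update rule is monotone in the initial configuration — more initially black nodes yield a superset of black nodes at every later time — so to check that a singleton is a dynamo it suffices to run the process starting from exactly one black node. Writing $C_n=(v_0,\dots,v_{n-1})$ with indices modulo $n$, I will therefore show that starting with $B_0=\{v_0\}$ the whole graph eventually turns black; this gives $\overleftarrow{MD}(C_n,\alpha)\le 1$, which together with the trivial bound $\overleftarrow{MD}(C_n,\alpha)\ge 1$ finishes the proof.

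Next I would prove, by a straightforward induction on $t$, the characterization: starting from $B_0=\{v_0\}$, a node $v_i$ lies in $B_t$ if and only if there is a walk of length exactly $t$ from $v_i$ to $v_0$ in $C_n$. The base case $t=0$ is immediate, and the inductive step is just the observation that $v_i\in B_t$ iff some neighbour $v_j$ of $v_i$ lies in $B_{t-1}$ iff (by the hypothesis) some neighbour $v_j$ of $v_i$ admits a walk of length $t-1$ to $v_0$ iff $v_i$ admits a walk of length $t$ to $v_0$ (prepend the step $v_i\to v_j$).

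The heart of the argument, and the only point where oddness of $n$ is used, is then to show that for every $i$ there is a walk of length $n-1$ from $v_i$ to $v_0$. A walk in $C_n$ that takes $a$ steps in the $+1$ direction and $\ell-a$ steps in the $-1$ direction starts at $v_i$ and ends at $v_{i+2a-\ell}$, so it is a walk from $v_i$ to $v_0$ precisely when $2a\equiv \ell-i \pmod n$. Taking $\ell=n-1$ and using that $\gcd(2,n)=1$ — this is exactly where $n$ odd enters — there is an $a\in\{0,\dots,n-1\}$ with $2a\equiv n-1-i \pmod n$; since $0\le a\le n-1=\ell$ this $a$ is admissible, and any interleaving of $a$ forward and $n-1-a$ backward steps is a valid walk. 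Hence $B_{n-1}=V$, and once every node is black each node has two black neighbours, so the all-black configuration is fixed; thus black color takes over and $\{v_0\}$ is a dynamo. I expect the only real obstacle to be minor: keeping the modular bookkeeping straight and making clear why odd $n$ is genuinely needed — for even $n$ the graph $C_n$ is bipartite, $B_t$ stays inside one side of the bipartition according to the parity of $t$, and a single black node can never take over, which is precisely the reason for the hypothesis.
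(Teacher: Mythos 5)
Your proof is correct, but it follows a genuinely different route from the paper's. The paper argues geometrically: starting from the single black node, the two wavefronts spread one step per round in each direction, and because $n=2k+1$ is odd they meet after $k$ rounds in a pair of \emph{adjacent} black nodes $v_{k+1},v_{k+2}$; such a pair sustains itself (each sees the other) and the black interval then grows back around the cycle in $k$ further rounds. You instead reduce the $\alpha$-threshold rule on the $2$-regular cycle to the OR-rule, invoke monotonicity in the initial configuration, derive the exact characterization $v_i\in B_t$ iff there is a walk of length $t$ from $v_i$ to $v_0$, and solve the congruence $2a\equiv n-1-i \pmod n$ (possible precisely because $2$ is invertible modulo odd $n$) to conclude $B_{n-1}=V$ in one shot. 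Both arguments use oddness in an essential way, and both implicitly or explicitly rely on monotonicity of the update rule so that extra initial black nodes only help. The paper's two-phase induction is more elementary and reuses a building block that appears elsewhere in the paper (two adjacent black nodes take over, as in the odd-cycle argument of Theorem~\ref{general:thm4}); your algebraic version buys an exact description of $B_t$ for every $t$, a clean explicit bound of $n-1$ rounds, and makes the parity obstruction for even $n$ (bipartiteness) completely transparent.
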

\begin{proof}
Consider an odd cycle $v_1,v_2,\cdots,v_{2k+1},v_1$ for $n=2k+1$ and the $\alpha$-threshold model. Assume that in the initial configuration $\mathcal{C}_0$ there is at least one black node, say $v_1$. In configuration $\mathcal{C}_1$, nodes $v_{2}$ and $v_{2k+1}$ are both black because $\alpha\le 1/2$. With a simple inductive argument, after $k$ rounds two adjacent nodes $v_{k+1}$ and $v_{k+2}$ will be black. In the next round, they both stay black and nodes $v_{k}$ and $v_{k+3}$ become black as well. Again with an inductive argument, after at most $k$ more rounds all nodes will be black. \qed
\end{proof}

Let us consider the majority model before moving to the case of $\alpha>1/2$. Consider a clique of size $\sqrt{n}$ and attach $\sqrt{n}-1$ different leaves to each of the nodes in this clique. This simple construction provides us with an $n$-node graph which has a dynamo of size $\sqrt{n}$ in the majority model. Note that if all nodes in the clique are black, in the next round all nodes become black. Can we do better? The answer is yes. Berger~\cite{berger2001dynamic}, surprisingly, proved that there exist arbitrarily large graphs which have dynamos of constant size in the majority model. Therefore, the trivial lower bound of $1\le \overleftarrow{MD}(G,maj)$ is tight, up to an additive constant.

Can we strengthen Berger's result by providing arbitrarily large graphs which have dynamos of constant size in the $\alpha$-threshold model for any $\alpha>1/2$? The answer is negative. Berger~\cite{berger2001dynamic} also proved that $2\alpha\sqrt{n}-1\le \overleftarrow{MD}(G,\alpha)$ for $\alpha>3/4$. The answer is not known for $1/2<\alpha\le 3/4$. 

Adapting the above construction, one can show that the lower bound of $2\alpha\sqrt{n}-1$ for $\alpha>3/4$ is tight, up to a multiplicative constant. We provide $n$-node graphs with dynamos of size $k=\sqrt{\alpha n/(1-\alpha)}$ for $\alpha> 3/4$ (actually this construction works also for $\alpha\ge 1/2$). Consider a clique of size $k$ and attach $\frac{n}{k}-1$ leaves to each of its nodes. The resulting graph has $k+k\left(\frac{n}{k}-1\right)=n$ nodes. Consider the initial configuration $\mathcal{C}_0$ in which the clique is fully black and all other nodes are white. In $\mathcal{C}_1$, all the leaves become black because their neighborhood is fully black in $\mathcal{C}_0$. Furthermore, each node $v$ in the clique remains black since it has $k-1$ black neighbors and $k-1\ge \alpha d\left(v\right)$ for $\alpha>1/2$, which we prove below. Thus, it has a dynamo of size $k$ in the $\alpha$-threshold model.
\begin{align*}
\alpha d\left(v\right)&=\alpha \left(k-1+\frac{n}{k}-1\right)=\alpha\left(\sqrt{\frac{\alpha}{1-\alpha}n}+\sqrt{\frac{1-\alpha}{\alpha}n}-2\right)\\ &= \sqrt{\frac{\alpha}{1-\alpha}}\left(\alpha\sqrt{n}+\left(1-\alpha\right)\sqrt{n}\right)-2\alpha\stackbin{\alpha\ge 1/2}{\le} \sqrt{\frac{\alpha}{1-\alpha}n}-1=k-1.
\end{align*}
\paragraph{Upper Bounds.} The trivial upper bound of $\overleftarrow{MD}(G,\alpha)\le n$ is tight for $\alpha>1/2$. Consider the $\alpha$-threshold model for $\alpha>1/2$ on the cycle $C_n$. We claim that a dynamo must include all nodes. Let $\mathcal{C}$ be a configuration on $C_n$ with at least one white node, in the next round both its neighbors will be white. Thus, a configuration with one or more white nodes never reaches a fully black configuration. For the case of $\alpha \le 1/2$ and the majority model, we do not know whether this trivial upper bound is tight or not.
\paragraph{Monotone Dynamo.}Similar to dynamo, one might bound the minimum size of a monotone dynamo. Let us start with the upper bounds. For the $\alpha$-threshold model and $\alpha>1/2$, the trivial upper bound of $n$ is tight for the cycle $C_n$ and there is not much known about the case of $\alpha\le 1/2$ and the majority model. On the other hand, any two adjacent nodes in $C_n$ are a monotone dynamo in the $\alpha$-threshold model for $\alpha \le 1/2$; thus, the trivial lower bound of 2 is tight. For the majority model, Peleg~\cite{peleg1998size} proved that $\sqrt{n}-1\le \overleftarrow{MD}_{mon}(G,maj)$. We prove that $\sqrt{\alpha n/(1-\alpha)}-1\le \overleftarrow{MD}_{mon}(G,\alpha)$ for $\alpha>1/2$ in Theorem~\ref{general:thm6}. Since the dynamos given in the constructions described above are actually monotone dynamos, both these bounds are tight, up to an additive constant. 
\begin{theorem}
\label{general:thm6}
For a graph $G=(V,E)$ and $\alpha>1/2$, we have $\sqrt{\alpha n/(1-\alpha)}-1\le \overleftarrow{MD}_{mon}(G,\alpha)$.
\end{theorem}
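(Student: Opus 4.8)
The plan is to run the standard ``bichromatic edges'' potential argument, tailored to the fact that along a monotone dynamo no black vertex is ever lost.

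First I would record that a monotone dynamo $S$ is automatically a robust set. Running the process from the configuration in which exactly the vertices of $S$ are black gives $B_0=S$, so by the definition of a monotone dynamo $S=B_0\subseteq B_1$; and $v\in S$ can stay black only if $d_S(v)\ge\alpha d(v)$. Rearranging $d_S(v)\ge\alpha d(v)$ to $d(v)-d_S(v)\le\frac{1-\alpha}{\alpha}d_S(v)$ and using $d_S(v)\le|S|-1$ in a simple graph, we get
\[
|\partial(S)|=\sum_{v\in S}\bigl(d(v)-d_S(v)\bigr)\le\frac{1-\alpha}{\alpha}\sum_{v\in S}d_S(v)\le\frac{1-\alpha}{\alpha}\,|S|\,\bigl(|S|-1\bigr).
\]

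Next, consider the run with $B_0=S$. By monotonicity and the dynamo property, $S=B_0\subseteq B_1\subseteq\cdots\subseteq B_T=V$ for some finite $T$, with $b_0=|S|$ and $b_T=n$. Let $\phi_t:=|\partial(B_t)|$ be the number of bichromatic edges. The core claim is that $\phi_{t+1}\le\phi_t-(b_{t+1}-b_t)$ for every $t<T$. Writing $X:=B_{t+1}\setminus B_t$ and bookkeeping which edges change their bichromatic status (an edge can be destroyed only between $B_t$ and $X$, and created only between $X$ and $V\setminus B_{t+1}$; edges inside $X$ stay monochromatic), one obtains
\[
\phi_{t+1}-\phi_t=\sum_{w\in X}\bigl(d_{V\setminus B_{t+1}}(w)-d_{B_t}(w)\bigr).
\]
For $w\in X$ we have $d_{B_t}(w)\ge\alpha d(w)$, hence $d_{V\setminus B_{t+1}}(w)\le d(w)-d_{B_t}(w)\le(1-\alpha)d(w)<\alpha d(w)\le d_{B_t}(w)$; since both quantities are integers, $d_{V\setminus B_{t+1}}(w)-d_{B_t}(w)\le-1$, and summing over $w\in X$ proves the claim.

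Finally I would telescope $\phi_{t+1}\le\phi_t-(b_{t+1}-b_t)$ over $t=0,\dots,T-1$ and use $\phi_T=|\partial(V)|=0$ to get $n-|S|\le\phi_0=|\partial(S)|\le\frac{1-\alpha}{\alpha}|S|(|S|-1)$, i.e., with $k=|S|$,
\[
n\le k+\frac{1-\alpha}{\alpha}k(k-1),
\]
and solving this quadratic inequality for $k$ yields $k\ge\sqrt{\alpha n/(1-\alpha)}-1$. The step I expect to be the main obstacle is the potential-monotonicity claim: one has to set up the edge bookkeeping carefully (edges inside $X$ and edges from $X$ to the still-white part must be handled without double counting), and then recognise that it is the integer-valued slack, available precisely because $\alpha>1/2$, that lets one charge a full unit of potential drop to each newly-blackened vertex — the cruder estimate $\phi_{t+1}-\phi_t\le-(2\alpha-1)\sum_{w\in X}d(w)$ is too weak. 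Everything else (robustness of $S$, the telescoping, the final calculation) is routine.
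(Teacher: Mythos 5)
Your proof is correct and follows essentially the same route as the paper's: the same edge-boundary potential $\Phi_t=|\partial(B_t)|$ with a drop of at least $|B_{t+1}\setminus B_t|$ per round (using $\alpha>1/2$ plus integrality), the same bound $|\partial(S)|\le\frac{1-\alpha}{\alpha}|S|\left(|S|-1\right)$ obtained from the fact that a monotone dynamo must be robust in its first step, and the same telescoping and final quadratic in $|S|$. If anything, your bookkeeping of which edges enter and leave the boundary is more explicit than the paper's one-line justification, and your final ``solve the quadratic'' step is left at the same level of detail as the paper's ``some small calculations''.
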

\begin{proof}
Let $D\subseteq V$ be a monotone dynamo in the $\alpha$-threshold model on $G$. Suppose the process starts from the configuration where only all nodes in $D$ are black. Recall that $B_t$ denotes the set of black nodes in round $t$. Then, $B_0=D$ and $B_t\subseteq B_{t+1}$ by the monotonicity of $D$. Furthermore, define the potential function $\Phi_t:=\partial\left(B_t\right)$. We claim that $\Phi_{t}\le \Phi_{t-1}-|B_{t}\setminus B_{t-1}|$ because for any newly added black node (i.e., any node in $B_{t}\setminus B_{t-1}$) the number of neighbors in $B_{t-1}$ is strictly larger than $V\setminus B_{t}$ (note that $\alpha>\frac{1}{2}$). In addition, since $D$ is a dynamo, $\mathcal{C}_T|_V=b$ for some $T\ge 0$, which implies $\Phi_T=0$. Thus,
\[
\Phi_T=0\le \Phi_0-\left(n-|D|\right).
\]
This yields
\begin{equation}
\label{general:eq 1}
n\le \Phi_0+|D|.
\end{equation}
For $v\in D$, $d_{V\setminus D}\left(v\right)\le \frac{1-\alpha}{\alpha}d_{D}\left(v\right)$ because $D$ is a monotone dynamo and at least $\alpha$ fraction of $v$'s neighbors must be in $D$. Furthermore, $d_{D}\left(v\right)\le |D|-1$, which implies that $$d_{V\setminus D}\left(v\right)\le \frac{1-\alpha}{\alpha}\left(|D|-1\right).$$Now, we have
\begin{multline}
\label{general:eq 2}
\Phi_0=\partial\left(D\right)=\sum_{v\in D}d_{V\setminus D}\left(v\right)\le \frac{1-\alpha}{\alpha}\sum_{v\in D} \left(|D|-1\right)=\frac{1-\alpha}{\alpha}|D|^2-\frac{1-\alpha}{\alpha}|D|.
\end{multline}
Putting Equations~(\ref{general:eq 1}) and ~(\ref{general:eq 2}) together implies that
\[
n\le\frac{1-\alpha}{\alpha}|D|^2+\left(1-\frac{1-\alpha}{\alpha}\right)|D|.
\]
Some small calculations yields that $\sqrt{\frac{\alpha n}{1-\alpha}}-1\le |D|$. \qed
\end{proof}
\subsubsection{r-Monotone Model}
Trivially, we have $r\le MD(G,r)$ and this is tight for the complete graph $K_n$. For the upper bound, it is proven~\cite{reichman2012new} that $MD(G,r)\le rn/\left(r+1\right)$. 
\subsubsection{r-Threshold Model}
We trivially have $r\le \overleftarrow{MD}\left(G,r\right)\le n$. These bounds are tight for $r\ge 2$ because $\overleftarrow{MD}\left(K_n,r\right)=r$ and $\overleftarrow{MD}\left(G,r\right)=n$ for any $r$-regular graph $G$ (see Lemma~\ref{general:lemma}). For $r=1$, we prove in Theorem~\ref{general:thm4} that $\overleftarrow{MD}\left(G,r\right)$ is equal to 1 or 2.
\begin{theorem}
\label{general:thm4}
In a graph $G$, for $r=1$ $\overleftarrow{MD}\left(G,r\right)=2$ if $G$ is bipartite and $\overleftarrow{MD}\left(G,r\right)=1$ otherwise.
\end{theorem}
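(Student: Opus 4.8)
The plan is to analyze the $1$-threshold model (where a node is black in the next round iff it has at least one black neighbor) by tracking a single black node and its "reachable" set over time. First I would handle the lower bound: a single node $\{v\}$ is never a dynamo. Starting from the configuration where only $v$ is black, in round $1$ exactly $\Gamma(v)$ is black (and $v$ itself is white unless $v\in\Gamma(v)$, which is impossible in a simple graph), in round $2$ exactly $\Gamma(\Gamma(v))$ is black, and so on. More precisely, one shows by induction that $B_t$ is exactly the set of nodes at distance $\le t$ from $v$ that have the same parity as $t$ in some walk sense — in a bipartite graph the black set oscillates between the two sides of the bipartition and can never be all of $V$, and even in a non-bipartite graph a single node fails because $v$ itself becomes white in round $1$ and whether it ever comes back depends on odd closed walks. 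So $\overleftarrow{MD}(G,1)\ge 2$ always, and in fact $\ge 2$ for bipartite $G$ is already witnessed by the oscillation argument; for the bipartite case I also need to rule out dynamos of size $2$ that... wait, no — for bipartite graphs I need to show $2$ suffices, not that it fails. Let me restate.

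For the bipartite upper bound: if $G$ is bipartite (and connected), pick any edge $\{u,v\}$ and start with both $u,v$ black. I claim $\{u,v\}$ is a dynamo. Since $u\sim v$, after one round $v$ has a black neighbor ($u$) so $v$ stays black, and likewise $u$ stays black; so $\{u,v\}$ is in fact robust (this matches the robust-set characterization in the preliminaries, $d_S(v)\ge 1$ for all $v\in S$). Now the black set is monotone nondecreasing: once a node is black it has a black neighbor forever (induction — its black neighbor either stays black or, since it too eventually has a permanent black neighbor...). Cleanest: show $B_t\subseteq B_{t+1}$ for all $t$, because every node in $B_t$ has a neighbor in $B_{t-1}\subseteq B_t$, hence is black in round $t+1$; the base case $B_0\subseteq B_1$ holds since $u,v$ are adjacent. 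Given monotonicity and connectedness, $B_t$ grows to include all of $V$ (any node adjacent to a black node becomes black next round), so black takes over. Combined with the lower bound, $\overleftarrow{MD}(G,1)=2$ for bipartite $G$.

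For the non-bipartite case I must show $\overleftarrow{MD}(G,1)=1$: some single node is a dynamo. Take $G$ non-bipartite and connected, so it contains an odd cycle. Pick a node $v$ lying on an odd cycle $C$ of length $2\ell+1$; I would argue that starting from only $v$ black, after $\ell$ rounds some edge of $C$ is monochromatically black (a black "pulse" sent both ways around the odd cycle from $v$ meets and overlaps, because the two arc-lengths differ in parity). Once an edge $\{a,b\}$ is entirely black, $\{a,b\}$ is robust, and by the monotonicity-plus-connectedness argument above the whole graph becomes black. The main obstacle is precisely this step — making rigorous that the process from a single seed on (or near) an odd cycle produces an all-black edge in finitely many rounds; the right bookkeeping is to track, for each node $w$, the set of lengths of walks from $v$ to $w$, note that $w\in B_t$ iff there is a $v$–$w$ walk of length exactly $t$, and observe that an odd closed walk through $v$ forces, for large enough $t$, two adjacent nodes to simultaneously admit $v$-walks of length $t$. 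One then just needs connectedness to finish, and to double-check the $r=1$, $\delta\ge 1$ assumption is automatically satisfied so that every node can in principle turn black.
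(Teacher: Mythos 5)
Your proof is correct and follows essentially the same route as the paper's: two adjacent black nodes take over via a growth/distance induction, in a bipartite graph the black set generated by a single seed oscillates between the two sides and can never cover $V$, and a seed $v$ on an odd cycle of length $2\ell+1$ sends pulses along both arcs that meet in two adjacent black nodes after $\ell$ rounds (your characterization that $w\in B_t$ iff there is a $v$--$w$ walk of length exactly $t$ is precisely the bookkeeping needed, so the step you flag as the ``main obstacle'' is already closed by the two-arc parity computation). Two small things to tidy: delete the stray assertion that $\overleftarrow{MD}(G,1)\ge 2$ ``always'' (it contradicts the non-bipartite case you prove afterwards), and when you invoke the adjacent-pair argument at the round where $a$ and $b$ first become black, note explicitly that the $1$-threshold update is monotone in the set of black nodes (extra black nodes never hurt), since your $B_t\subseteq B_{t+1}$ induction verified its base case only for the configuration in which exactly $\{u,v\}$ is black, and the same remark is what upgrades ``takeover from exactly $\{u,v\}$ black'' to the dynamo property for arbitrary configurations containing $\{u,v\}$.
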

\begin{proof}
Let us first argue that any two adjacent nodes in a connected graph $G$ are a dynamo in the $1$-threshold model. Let $v$ and $u$ be two adjacent nodes and assume that the process triggers from a configuration in which $v$ and $u$ are black. A simple inductive argument implies that in the $t$-th round for $t\ge 0$ all nodes in distance at most $t$ from $v$ (similarly $u$) are black. Thus, after $t'$ rounds for some $t'$ smaller than the diameter of $G$ the graph is fully black. 

Now, we prove that $\overleftarrow{MD}\left(G,r\right)=2$ for $r=1$ if $G$ is bipartite. From above, we know that $\overleftarrow{MD}\left(G,r\right)\le 2$; thus, it remains to show that $\overleftarrow{MD}\left(G,r\right)\ge 2$ in this setting. We argue that a configuration with only one black node cannot make $G$ fully black. Since $G$ is bipartite, we can partition the node set of $G$ into two non-empty independent sets $V_1$ and $V_2$. Without loss of generality, assume that we start from a configuration where a node in $V_1$ is black and all other nodes are white. Since all neighbors of the nodes in $V_1$ are in $V_2$ and vice versa, the color of each node in $V_1$ in round $t$ is only a function of the color of nodes in $V_2$ in the $\left(t-1\right)$-th round and the other way around. This implies that by starting from such a configuration, in the next round all nodes in $V_1$ will be white because all nodes in $V_2$ are white initially. In the round after that, all nodes in $V_2$ will be white with the same argument and so on. By an inductive argument, $\mathcal{C}_t|_{V_1}=w$ for odd $t$ and $\mathcal{C}_t|_{V_2}=w$ for even $t$. Thus, there is no dynamo of size one.

Finally, we prove that if $G$ is non-bipartite, then it has a dynamo of size $1$. Since $G$ is not bipartite, it has at least one odd cycle. Let $C_n:=v_1,v_2,\cdots,v_{2k+1},v_1$ be an arbitrary odd cycle in $G$ of size $2k+1$ for some integer $k\ge 1$. Now, suppose that the process starts from a configuration where node $v_1$ is black. In the next round nodes $v_2$ and $v_{2k+1}$ will be black. After one more round nodes $v_3$ and $v_{2k}$ will be black. By applying the same argument, after $k$ rounds nodes $v_{k+1}$ and $v_{k+2}$ will be black. As we discussed above, two adjacent black nodes make the whole graph black. Thus, in a non-bipartite graph, a node which is on an odd cycle is a dynamo of size one in the $1$-threshold model. \qed
\end{proof}
\paragraph{Monotone Dynamo.}For a graph $G=\left(V,E\right)$, the minimum size of a monotone dynamo in the $r$-threshold model is lower-bounded by $r+1$. To prove this, assume that there is a monotone dynamo $D$ of size $r$ or smaller. If $\mathcal{C}_0|_D=b$ and $\mathcal{C}_0|_{V\setminus D}=w$, then $\mathcal{C}_1|_D=w$; this is in contradiction with the monotonicity of $D$. This lower bound is tight because in the complete graph $K_n$, a set of size $r+1$ is a monotone dynamo. Furthermore, the trivial upper bound of $n$ is tight for $r$-regular graphs with $r\ge 2$ (see Lemma~\ref{general:lemma}). For $r=1$, any two adjacent nodes are a monotone dynamo in the $r$-threshold model; thus, $\overleftarrow{MD}_{mon}(G,1)=2$.

\subsection{Number of Edges}
\label{general:nymberofedges}
In this section, we aim to understand how the edge density of the underlying graph influences the minimum size of a dynamo in the $r$-monotone and $r$-threshold model. First we present some bounds on $MD(G,r)$ and $\overleftarrow{MD}(G,r)$ as a function of $m$, $n$, and $r$. Then, we provide sufficient condition, in terms of the minimum degree, for a graph $G$ to have a dynamo of size $r$ in the $r$-threshold model.

There is a simple argument to show that $(n-\frac{m}{r})\le MD(G,r)$ for any graph $G=(V,E)$. Consider the $r$-monotone model on $G$. Let set $D\subset V$ be a dynamo. Assume that initially only nodes in $D$ are black. Recall that $B_t$ for $t\ge 0$ denotes the set of black nodes in $\mathcal{C}_t$. Clearly, each node in $B_t$ for $t\ge 1$ has at least $r$ neighbors in $\bigcup_{t'=0}^{t-1}B_{t'}$. Therefore, $m\ge \left(n-|D|\right)r$, which implies that $|D|\ge \left(n-\frac{m}{r}\right)$. 

Now, we prove in Lemma~\ref{bipartite} that $2MD(G,r)\le\overleftarrow{MD}(G,r)$ if $G$ is bipartite. Therefore, we can conclude that $2\left(n-\frac{m}{r}\right)\le \overleftarrow{MD}(G,r)$ for any bipartite graph $G$.
\begin{lemma}\label{bipartite}
Let $G=\left(V_1\cup V_2,E\right)$ be a bipartite graph. Then $2MD\left(G,r\right)\le \overleftarrow{MD}\left(G,r\right)$.
\end{lemma}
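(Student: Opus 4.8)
The plan is to take a dynamo $D$ in the $r$-threshold model on the bipartite graph $G=(V_1\cup V_2,E)$ of minimum size $\overleftarrow{MD}(G,r)$ and to extract from it a dynamo in the $r$-monotone model whose size is at most $|D|/2$. Since $G$ is bipartite, the natural idea is to look at the two halves $D_1:=D\cap V_1$ and $D_2:=D\cap V_2$ and to keep only the smaller one; without loss of generality $|D_1|\le |D_2|$, so $|D_1|\le |D|/2$. The claim will then be that $D_1$ is a dynamo in the $r$-monotone model, which yields $MD(G,r)\le |D_1|\le |D|/2=\overleftarrow{MD}(G,r)/2$, i.e. the desired inequality.

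The key observation driving this is the standard ``odd/even decoupling'' in bipartite graphs, already used in the proof of Theorem~\ref{general:thm4}: in the $r$-threshold dynamics started from $\mathcal{C}_0$ with $B_0=D$, the color of a node in $V_1$ at round $t$ depends only on the colors of $V_2$-nodes at round $t-1$, and vice versa. I would run the $r$-threshold process from $B_0=D$; since $D$ is a dynamo, there is a time $T$ with $\mathcal{C}_T|_V=b$. Now I would compare this with the $r$-monotone process started from $D_1$ alone (all of $V\setminus D_1$ white). The heart of the argument is an inductive claim linking the two processes along a subsequence of even/odd rounds — roughly, that the set of $V_1$-nodes that are black in the $r$-threshold run at even rounds $2t$ is contained in (a subset of) what the $r$-monotone run from $D_1$ has made black by some corresponding round, and similarly one controls $V_2$. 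Because in the monotone model black is never lost, once we know every node of $V_1$ (resp. $V_2$) appears black at \emph{some} round of the threshold run, monotonicity lets us accumulate all of these, so the monotone run from $D_1$ eventually blackens everything. One should set this up carefully: starting the monotone process from $D_1$, after one round all $V_2$-nodes with $\ge r$ neighbors in $D_1$ turn black and stay black; this already matches the first-round black $V_2$-nodes of the threshold run from $D_1\subseteq D$, and then an induction propagates the containment, using that threshold-black-at-round-$t$ for a $V_1$-node needs $r$ black $V_2$-neighbors at round $t-1$, which the monotone run has by the inductive hypothesis (monotonicity covers the fact that in the threshold run from $D$ there may have been \emph{extra} black $V_2$-nodes that later turned white — we only need a sufficient set, and we have it).

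The step I expect to be the main obstacle is making the induction genuinely go through when the threshold run from $D$ uses black nodes of $V_2$ that originate from $D_2$ rather than from $D_1$: a priori a $V_1$-node might only get its $r$ black $V_2$-neighbors because those $V_2$-neighbors were black initially (being in $D_2$), and dropping $D_2$ destroys that. The fix is to run the threshold process from $D_1$ \emph{only} (not from all of $D$) and show it still blackens $V_1$ — here one uses exactly the decoupling: starting from $D_1\subseteq V_1$, the $V_1$-colors at even rounds evolve autonomously (via two applications of the threshold map), and one argues this even-round sub-dynamics on $V_1$ is monotone increasing, reaching all of $V_1$; symmetrically the odd-round $V_2$-sub-dynamics, seeded after one step, is monotone and reaches all of $V_2$. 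Transferring ``threshold run from $D_1$ blackens $V$'' to ``monotone run from $D_1$ blackens $V$'' is then immediate since every node that is ever black in the threshold run from $D_1$ got $r$ black neighbors at some point, and the monotone run only adds black nodes, so it dominates the threshold run round by round. Assembling these pieces gives $D_1$ as a monotone-model dynamo and completes the proof; combining with the earlier bound $(n-m/r)\le MD(G,r)$ then yields $2(n-m/r)\le \overleftarrow{MD}(G,r)$ as remarked.
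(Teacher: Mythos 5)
Your overall plan---take a minimum dynamo $D$ of the $r$-threshold model, keep the smaller side $D_1=D\cap V_1$ (so $|D_1|\le|D|/2$), and show $D_1$ is a dynamo for the $r$-monotone model---is exactly the paper's proof, and the ``obstacle'' you flag is not actually there. The threshold model is memoryless: a node's color at any round $t\ge 1$ is determined solely by its neighbors' colors at round $t-1$. So if a $V_1$-node is black at an even round $2t\ge 2$ in the threshold run from $D$, its $r$ black $V_2$-neighbors at round $2t-1\ge 1$ are black \emph{because} each of them has $r$ black $V_1$-neighbors at round $2t-2$; initial membership in $D_2$ can only matter at round $0$, which the induction never consults for $V_2$-nodes. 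Hence the direct induction you sketch first (``$V_1$-node black at round $2t$ in the threshold run from $D$ implies black at round $2t$ in the monotone run from $D_1$'') goes through verbatim, finishing with: $V_1$ is fully black at some even round since $D$ is a dynamo, and one more monotone round blackens $V_2$ because every $V_2$-node has all of its $\ge r$ neighbors in $V_1$. This is precisely the paper's argument; the detour is unnecessary.

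The detour you then take has a genuine soft spot. You justify ``the threshold run seeded by $D_1$ alone blackens all of $V_1$'' by asserting that the even-round sub-dynamics on $V_1$ is \emph{monotone increasing}. The two-round map $F$ is monotone as an operator ($S\subseteq S'$ implies $F(S)\subseteq F(S')$), but it is not progressive: $S\subseteq F(S)$ can fail (e.g.\ a single black node on $C_4$ with $r=2$ dies out after one double-step), and nothing in your write-up connects the run from $D_1$ back to the hypothesis that $D$ is a dynamo, which is the only possible source of ``reaches all of $V_1$.'' The correct use of the decoupling you invoke is a trajectory identification, not progressivity: the $V_1$-colors at even rounds depend only on the initial $V_1$-colors, which are the same (namely $D_1$) whether the process is seeded with $D$ or with $D_1$; since the run from $D$ reaches the all-black configuration (which is fixed, as $\delta\ge r$), the run from $D_1$ has $V_1$ fully black at some even round. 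With that repair your remaining steps are sound---the monotone run from $D_1$ dominates the threshold run from $D_1$ round by round, so it accumulates all of $V_1$ and then all of $V_2$---but as written the pivotal ``reaching all of $V_1$'' rests on an unjustified (and in general false) claim.
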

\begin{proof} 
Let $D$ be a dynamo in the $r$-threshold model on $G$. We construct a dynamo of size at most $|D|/2$ in the $r$-monotone model. Define $D_1:=V_1\cap D$, and without loss of generality assume that $|D_1|\le |D|/2$. We claim that $D_1$ is a dynamo in the $r$-monotone model on $G$.

For a node $v$ and $t\ge 0$, let $\mathcal{C}_t\left(v\right)$ (resp. $\mathcal{C}'_t\left(v\right)$) denote $v$'s color in the $t$-th round of the $r$-threshold model (resp. the $r$-monotone model) on $G$ assuming that initially only nodes in set $D$ (resp. $D_1$) are black. We claim that for any node $v\in V_1$ and $t\ge 0$, if $\mathcal{C}_{2t}\left(v\right)=b$ then $\mathcal{C}'_{2t}\left(v\right)=b$. We prove this claim by applying induction. The base case $t=0$ is true by definition. For some $t\ge 1$ and $v\in V_1$, assume that $\mathcal{C}_{2t}\left(v\right)=b$. This implies that $v$ has $r$ neighbors $u_1,\cdots, u_r$ in $V_2$ such that $\mathcal{C}_{2t-1}\left(u_i\right)=b$ for $1\le i\le r$. Therefore, each $u_i$ has $r$ neighbors $u_{i}^{1},\cdots, u_i^{r}$ in $V_1$ so that $\mathcal{C}_{2t-2}(u_{i}^{j})=b$ for $1\le j\le r$. By the induction hypothesis, $\mathcal{C}'_{2t-2}(u_i^{j})=b$ for any $1\le i,j\le r$. Since in the $r$-monotone model a node with $r$ black neighbors becomes black, we have $\mathcal{C}'_{2t-1}\left(u_i\right)=b$ for $1\le i\le r$. Therefore, $\mathcal{C}'_{2t}\left(v\right)=b$. Let $t_f$ be the minimum $t$ for which $\mathcal{C}_{2t_f}|_{V_1}=b$ (note that such $t$ exists since $D$ is a dynamo). Then, we have $\mathcal{C}'_{2t_f}|_{V_1}=b$, which implies that $\mathcal{C}'_{2t_f+1}|_V=b$. This is true since in the $r$-monotone model all nodes in $V_1$ remain black in round $2t_f+1$ and all nodes in $V_2$ will become black (note that all neighbors of a node in $V_2$ are in $V_1$, which are all black). Thus, $D_1$ is a dynamo in the $r$-monotone model on $G$. \qed
\end{proof}
  
So far, we showed that $2\left(n-\frac{m}{r}\right)\le \overleftarrow{MD}(G,r)$ for any bipartite graph $G$. Does this bound actually hold for any graph? The answer is positive. We prove such statement in Theorem~\ref{theorem-threshold-lower-bound}. We basically prove that if we have a lower bound of a particular form on the minimum size of a dynamo in the $r$-threshold model for all bipartite graphs, then the same lower bound holds for all graphs.
\begin{theorem}
\label{theorem-threshold-lower-bound}
$2\left(n-\frac{m}{r}\right)\le \overleftarrow{MD}(G,r)$ for any graph $G=(V,E)$. 
\end{theorem}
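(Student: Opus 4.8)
The plan is to reduce the general statement to the bipartite case, which is already in hand: combining Lemma~\ref{bipartite} with the bound $n'-\tfrac{m'}{r}\le MD(G',r)$ shows that $2\left(n'-\tfrac{m'}{r}\right)\le\overleftarrow{MD}(G',r)$ for \emph{every} bipartite graph $G'$ with $n'$ nodes, $m'$ edges and $r\le\delta(G')$. So it suffices to attach to an arbitrary $G$ a bipartite graph $G'$ whose dynamo number controls that of $G$ and whose parameters scale in just the right way; the natural choice is the bipartite double cover.

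Since the claim is already known when $G$ is bipartite (see the discussion preceding Lemma~\ref{bipartite}), I would assume $G$ is non-bipartite and let $G'=(V',E')$ be its bipartite double cover: $V'=V\times\{0,1\}$, and for each edge $\{u,v\}\in E$ we include both $\{(u,0),(v,1)\}$ and $\{(u,1),(v,0)\}$ in $E'$. Then $G'$ is bipartite with sides $V\times\{0\}$ and $V\times\{1\}$, it is connected because $G$ is non-bipartite, it has $2n$ nodes and $2m$ edges, and every node $(v,i)$ has degree $d(v)$, so $\delta(G')=\delta(G)\ge r$ and the bipartite bound applies to $G'$:
\[
\overleftarrow{MD}(G',r)\ \ge\ 2\left(2n-\frac{2m}{r}\right)\ =\ 4\left(n-\frac{m}{r}\right).
\]

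The heart of the argument is a coupling showing $\overleftarrow{MD}(G',r)\le 2\,\overleftarrow{MD}(G,r)$. First I would record the elementary fact that the $r$-threshold update map on black sets is monotone in set inclusion, so that a set $S$ is a dynamo precisely when the process started from ``$S$ black, everything else white'' reaches the all-black configuration. Now let $D$ be a minimum dynamo of $G$ and run the $r$-threshold model on $G'$ from ``$D\times\{0,1\}$ black, the rest white'' --- this is exactly the duplicate of the configuration ``$D$ black, rest white'' on $G$. By induction on $t$ one checks that the $G'$-configuration at time $t$ is always the duplicate of the $G$-configuration at time $t$: the neighbors of $(v,i)$ in $G'$ are the nodes $(u,1-i)$ with $u\in\Gamma(v)$, which by the induction hypothesis carry the color of $u$ at time $t-1$, so $(v,i)$ has exactly $|\Gamma_b^{\mathcal{C}_{t-1}}(v)|$ black neighbors and turns black iff $v$ does. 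Since $D$ is a dynamo, $G$ becomes monochromatic black, hence so does $G'$, so $D\times\{0,1\}$ is a dynamo of $G'$ of size $2|D|=2\,\overleftarrow{MD}(G,r)$.

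Chaining the two inequalities, $2\,\overleftarrow{MD}(G,r)\ge\overleftarrow{MD}(G',r)\ge 4\left(n-\tfrac{m}{r}\right)$, yields $\overleftarrow{MD}(G,r)\ge 2\left(n-\tfrac{m}{r}\right)$. The main obstacle is the coupling step of the previous paragraph: everything else is bookkeeping (connectivity of the double cover, keeping $\delta\ge r$ so the bipartite bound is applicable), but one does need the monotonicity remark to conclude that the doubled set is genuinely a dynamo of $G'$ rather than merely that this particular doubled run reaches all-black. Note also that $D\times\{0,1\}$ need not be a \emph{minimum} dynamo of $G'$; we only use that it \emph{is} a dynamo, which is all the upper bound on $\overleftarrow{MD}(G',r)$ requires. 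Finally, the factor-two blow-ups of $n$, of $m$, and of the dynamo size cancel exactly, which is precisely why the ``particular form'' $2\left(n-\tfrac{m}{r}\right)$ is the one that transfers from bipartite graphs to all graphs.
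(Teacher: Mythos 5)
Your proposal is correct and follows essentially the same route as the paper: the paper also passes to the bipartite double cover (there called $F$, with parts $\{x_i\}$ and $\{y_i\}$ and edges $\{x_i,y_j\}$ for $\{v_i,v_j\}\in E$), uses the same duplication coupling to show a doubled dynamo of $G$ is a dynamo of the cover, and then invokes Lemma~\ref{bipartite} together with the bound $n-\frac{m}{r}\le MD(\cdot,r)$, merely phrasing the conclusion as a proof by contradiction rather than chaining inequalities directly. Your explicit appeals to monotonicity of the update rule (so that reaching all-black from exactly the doubled set black suffices) and to non-bipartiteness of $G$ for connectivity of the cover are small points the paper leaves implicit, but the argument is the same.
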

\begin{proof}
We know that $2\left(n-\frac{m}{r}\right)\le \overleftarrow{MD}(G,r)$ for any bipartite graph $G$ by Lemma~\ref{bipartite}. We want to show that this is actually true for any graph. For the sake of contradiction, assume that there is a graph $H=(V_{H},E_{H})$ which has a dynamo $D_H$ of size $|D_H|<2\left(n_{\scaleto{H}{4pt}}-\frac{m_{\scaleto{H}{4pt}}}{r}\right)$ in the $r$-threshold model, where $n_{\scaleto{H}{4pt}}:=|V_H|$ and $m_{\scaleto{H}{4pt}}:=|E_H|$. Suppose that $V_{H}:=\{v_1,\cdots, v_{n_{\scaleto{H}{4pt}}}\}$. We construct a bipartite graph $F=(V_F,E_F)$, where $V_F=\{x_1,\cdots,x_{n_{\scaleto{H}{4pt}}}\}\cup\{y_1,\cdots, y_{n_{\scaleto{H}{4pt}}}\}$ and we add an edge between $x_i$ and $y_j$ if and only if $\{v_i,v_j\}\in E_H$. Define the node set $D_F$ by including nodes $x_i$ and $y_i$ if and only if $v_i\in D_H$, which implies that $|D_F|=2|D_H|$. We claim that $D_F$ is a dynamo in the $r$-threshold model on $F$. Therefore, the bipartite graph $F$ has a dynamo of size
\[
|D_F|=2|D_H|< 2\cdot 2\left(n_{\scaleto{H}{4pt}}-\frac{m_{\scaleto{H}{4pt}}}{r}\right)=2\left(n_{\scaleto{F}{4pt}}-\frac{m_{\scaleto{F}{4pt}}}{r}\right)
\]  
where we used $n_{\scaleto{F}{4pt}}=2n_{\scaleto{H}{4pt}}$ and $m_{\scaleto{F}{4pt}}=2m_{\scaleto{H}{4pt}}$. This is a contradiction.

It remains to prove that $D_F$ is a dynamo in $F$. Consider the $r$-threshold model on $F$ and $H$ where initially $x_i$ and $y_i$ have the same color as $v_i$. By an inductive argument, it is easy to see that after $t$ rounds for any $t\ge 0$, $x_i$'s and $y_i$'s color will be identical to $v_i$'s color. If initially all nodes in $D_H$ are black, eventually all nodes in $H$ will become black since $D_H$ is a dynamo. Therefore, if initially all nodes in $D_F$ are black, eventually all nodes in $F$ will become black; that is, $D_F$ is a dynamo in $F$. \qed
\end{proof}
\paragraph{Tightness.}Let consider the above bounds for the $d$-dimensional torus $\mathbb{T}_L^d$ which is the graph whose node set is equal to $[L]^d:=\{1,\cdots,L\}^d$ and two nodes are adjacent if and only if they differ by $1$ or $L-1$ in exactly one coordinate. The number of nodes and edges in $\mathbb{T}_L^d$ are equal to $L^d$ and $dL^d$, respectively. Thus, by the aforementioned lower bounds we have
\[
\left(1-\frac{d}{r}\right)L^d\le MD\left(\mathbb{T}_L^d,r\right)\ \ \text{and} \ \ 2\left(1-\frac{d}{r}\right)L^d\le \overleftarrow{MD}\left(\mathbb{T}_L^d,r\right).
\] 
Balister, Bollob\'{a}s, Johnson, and Walters~\cite{balbol09} proved that there is a dynamo of size $\left(1-\frac{d}{r}\right)L^d+\mathcal{O}(L^{d-1})$ in the $r$-monotone model on $\mathbb{T}_L^d$. Furthermore, Jeger and Zehmakan~\cite{jeger2019dynamicc} showed that there exists a dynamo of size $2\left(1-\frac{d}{r}\right)L^d+\mathcal{O}(L^{d-1})$ in the $r$-threshold model on $\mathbb{T}_L^d$. Therefore, the above bounds are tight, up to the terms of lower orders.

For two graphs $G=\left(V,E\right)$ and $G'=\left(V,E'\right)$, if $E\subset E'$ then $MD\left(G',r\right)\le MD\left(G,r\right)$ and $\overleftarrow{MD}\left(G',r\right)\le \overleftarrow{MD}\left(G,r\right)$. This is true because by a simple inductive argument any dynamo in $G$ is also a dynamo in $G'$. Thus, if we keep adding edges to any graph, eventually it will have a dynamo of minimum possible size, namely $r$, in both the $r$-monotone and $r$-threshold model. Thus, it would be interesting to ask for the degree-based density conditions that ensure that a graph has a dynamo of size $r$. Gunderson~\cite{gunderson2017minimum} proved that if $\delta \ge n/2+r$ for a graph $G$ ($r$ can be replaced by $r-3$ for $r\ge 4$), then $MD\left(G,r\right)=r$. We provide similar results for the threshold variant.
\begin{theorem}
\label{general:thm5}
If $\delta\ge n/2+r$ for a graph $G=\left(V,E\right)$, then it has $\Omega\left(n^r\right)$ dynamos of size $r$ in the $r$-threshold model. 
\end{theorem}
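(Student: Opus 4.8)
The plan is to find many $r$-subsets $D \subseteq V$ that are dynamos in the $r$-threshold model, using the minimum-degree hypothesis $\delta \ge n/2 + r$ to control how fast black spreads. The natural target behaviour is: if $D = \{v_1, \dots, v_r\}$ is black initially, then in the first round every node with at least $r$ neighbours in $D$ turns black, in particular (by a pigeonhole/intersection argument) a large chunk of $V$ becomes black, and once the black set $B_1$ is large enough, it takes over in a bounded number of further rounds. The key quantitative fact I would extract from $\delta \ge n/2 + r$ is that any $r$ nodes have a large common neighbourhood: for nodes $v_1, \dots, v_r$, $|\bigcap_i \Gamma(v_i)| \ge \sum_i d(v_i) - (r-1)n \ge r(n/2 + r) - (r-1)n = r^2 - n/2 \cdot \dots$ — wait, that is not immediately positive, so I would instead only need that the set $A := \{u \in V : d_D(u) \ge r\}$ of nodes adjacent to \emph{all} of $D$ is nonempty and, more importantly, that $A \cup D$ together with one more round already captures everything. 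Concretely I would show $|A| \ge$ (something like) $r\delta - (r-1)n \ge r(n/2+r) - (r-1)n$; for this to be useful I should choose $D$ not arbitrarily but as an $r$-clique (or near-clique) of high-degree vertices, so that the counting $|\bigcap_i \Gamma(v_i)| \ge \sum_i d(v_i) - (r-1)(n - ?) $ improves. The honest version: pick $D$ to be any set of $r$ pairwise-adjacent vertices; then each $v_i$ has $\ge \delta - (r-1) \ge n/2 + 1$ neighbours outside $D$, and by inclusion–exclusion the common outside-neighbourhood has size $\ge r(n/2+1) - (r-1)n = n/2 + r - (n/2)(r-1)\cdot$ — again this degrades for $r \ge 3$, so the correct move is to not insist on a \emph{common} neighbour but to argue round-by-round that the black set doubles past $n/2$.

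Here is the cleaner route I would actually pursue. \textbf{Step 1: reduce to a single sufficient condition.} Show that if at some round the black set $B$ satisfies $|B| > n/2$, then in one more round every node is black: indeed each $v$ has $d(v) \ge n/2 + r$, so $d_B(v) \ge d(v) - |V \setminus B| \ge (n/2 + r) - (n/2 - 1) = r + 1 \ge r$, hence $v$ becomes (or stays) black; after that the configuration is all-black and stays all-black. So it suffices that $D$ drives the black set past $n/2$ in finitely many rounds. \textbf{Step 2: get past $n/2$ from an $r$-clique.} Let $D$ be an $r$-clique in $G$ (which exists: a max-degree vertex and $r-1$ of its neighbours — actually I need those $r-1$ to be mutually adjacent too; alternatively relax $D$ to an independent set or arbitrary set and redo the count). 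Starting from $B_0 = D$: a node $u \notin D$ joins $B_1$ iff $d_D(u) \ge r$, i.e. iff $u$ is adjacent to all of $D$. Count such $u$: $|B_1 \setminus D| \ge \sum_{i=1}^{r} d_{V \setminus D}(v_i) - (r-1)|V \setminus D| \ge r(\delta - (r-1)) - (r-1)(n-r)$. With $\delta \ge n/2 + r$ this is $\ge r(n/2 + 1) - (r-1)(n-r) = rn/2 + r - (r-1)n + (r-1)r$, which for $r \ge 2$ is negative in leading order — confirming that a one-round jump is \emph{not} enough and I must iterate, tracking that $B_{t+1} \supseteq B_t$ is false in the threshold model, so monotonicity is the real obstacle.

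\textbf{The main obstacle, and how I expect the paper to handle it.} Because the $r$-threshold model is non-monotone, I cannot simply say "$B_t$ grows"; a node with $r$ black neighbours at time $t$ may lose them at time $t+1$. The right fix — and what I would do — is to arrange the dynamo so that the very first round already produces a \emph{robust} black set that is then trivially a dynamo: choose $D$ so that $B_1$ contains a set $R$ with $d_R(v) \ge r$ for all $v \in R$ (a robust set, by the characterization in the Preliminaries), and with $R$ large enough — ideally $|R| > n/2$ so Step 1 finishes the job, or $|R|$ large enough that a bounded number of further (now essentially monotone because $R$ never shrinks) rounds reach $>n/2$. To count how many choices of $D$ yield this, I would fix $R$ to be (say) a high-degree vertex together with the portion of $V$ at distance $\le 1$ behaving well, then count $r$-subsets $D$ of the common neighbourhood of $R$ — each such $D$ makes all of $R$ black at time $1$ while $R$ stays black, so $D$ is a dynamo. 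Since the relevant common neighbourhood has size $\Theta(n)$ (forced by $\delta \ge n/2 + r$ via a two-step inclusion–exclusion that now \emph{does} stay $\Theta(n)$ because we intersect neighbourhoods of only \emph{two} carefully chosen vertices, not $r$ of them), the number of such $D$ is $\binom{\Theta(n)}{r} = \Omega(n^r)$, giving the claim. The delicate point is making the constants in the inclusion–exclusion work out with the slack "$+r$" in $\delta \ge n/2 + r$; I expect that is exactly where the hypothesis is used tightly, mirroring Gunderson's bound.
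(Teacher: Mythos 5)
There is a genuine gap. Your endgame (once at least $n/2$ nodes are black, every node has $d(v)-n/2\ge r$ black neighbours, so the next configuration is all black, and all-black is a fixed point) matches the paper's final step and is fine. But the heart of the theorem --- getting from a set of size $r$ to a black set of linear size, and counting $\Omega(n^r)$ such $r$-sets --- is never established. Your direct inclusion--exclusion from an $r$-set fails, as you yourself observe, and your fallback sketch does not work as stated: for \emph{all} of a set $R$ to turn black in the first round, every node of $R$ must have $r$ neighbours in $D$, i.e.\ must be adjacent to \emph{every} node of $D$, so $D$ must lie in $\bigcap_{v\in R}\Gamma(v)$; under $\delta\ge n/2+r$ each vertex may miss almost $n/2$ others, so this common neighbourhood can be empty as soon as $|R|$ has linear size, which is exactly the regime you need. (Also, an $r$-clique need not exist under the hypothesis: near-complete-bipartite examples have minimum degree $n/2+r$ and bounded clique number.) Finally, your concern about non-monotonicity is a red herring: one never needs $B_t\subseteq B_{t+1}$, only a lower bound on $|B_{t+1}|$ as a function of $B_t$, since the all-black configuration is absorbing; so the detour through robust sets is unnecessary and, as sketched, unquantified.

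The paper's proof supplies precisely the two ingredients you are missing. First, a degree-counting inequality valid for \emph{every} set $D$: writing $V_1^D$ for the nodes with at least $r$ neighbours in $D$, one compares $\sum_{v\in V}d_D(v)=\sum_{v\in D}d(v)\ge|D|\delta$ with the upper bound $|D|\,|V_1^D|+(r-1)(n-|V_1^D|)$ to get
\[
|V_1^D|\ \ge\ \frac{|D|-2r+2}{2|D|-2r+2}\,n+r .
\]
Second, an averaging trick that replaces your failed $r$-wise intersection: apply the inequality to an \emph{arbitrary} set $D_1$ of size $2r-1$ to get $|V_1^{D_1}|\ge n/2r$, and then pigeonhole over the $\binom{2r-1}{r}$ subsets of size $r$ to find $D_2\subset D_1$ such that at least $n/(2r)^{2r}$ nodes have $r$ neighbours in $D_2$. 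Iterating the displayed inequality (with $D$ equal to the current black set) takes the black count from $n/(2r)^{2r}$ to $n/2-(2r)^{3r}$ to $n/2$, and then your Step 1 finishes. The count $\Omega(n^r)$ then comes for free: every one of the $\binom{n}{2r-1}=\Omega(n^{2r-1})$ sets of size $2r-1$ contains such a dynamo $D_2$, while each $r$-set lies in only $\mathcal{O}(n^{r-1})$ sets of size $2r-1$. Your proposal contains neither the uniform expansion inequality nor the $(2r-1)$-set averaging step nor a working counting scheme, so it does not prove the statement.
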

Note that this statement is stronger than Gunderson's result in two ways. Firstly, we prove that there is a dynamo of size $r$ in the $r$-threshold model, which immediately implies that there is a dynamo of such size in the $r$-monotone model. Furthermore, we prove that actually there exist $\Omega\left(n^r\right)$ of such dynamos (this is asymptotically best possible since there are ${n\choose r}=\mathcal{O}\left(n^r\right)$ sets of size $r$). It is worth to mention that our proof is substantially shorter.

\begin{proof}
For a subset $D\subseteq V$, define the sets 
\[
V_1^{D}:=\{v\in V:d_D\left(v\right)\ge r\}\ \ \text{and}\ \ V_2^{D}:=V\setminus V_1^{D}.
\]
Firstly, we have
\begin{equation}
\label{general:eq 4}
\sum_{v\in V}d_D\left(v\right)=\sum_{v\in D}d\left(v\right)\ge |D|\delta.
\end{equation}
Furthermore, since each node in $V_1^D$ has at most $|D|$ neighbors in $D$ and each node in $V_2^D$ has at most $\left(r-1\right)$ neighbors in $D$, we have
\[
\sum_{v\in V}d_D\left(v\right)\le |D|\cdot |V_1^D|+\left(r-1\right)|V_2^D|.
\]
By applying $|V_1^D|+|V_2^D|=n$, we get
\begin{equation}
\label{general:eq 5dynamo} \sum_{v\in V}d_D\left(v\right)\le\left(|D|-\left(r-1\right)\right)|V_1^D|+\left(r-1\right)n.
\end{equation}
By combining Equations~(\ref{general:eq 4}) and~(\ref{general:eq 5dynamo}), we have 
\[
\left(|D|-\left(r-1\right)\right)|V_1^D|+\left(r-1\right)n\ge |D|\delta.
\]
Applying $\delta\ge \frac{n}{2}+r$ gives us
\[
\left(|D|-\left(r-1\right)\right)|V_1^D|+\left(r-1\right)n\ge \frac{n}{2}|D|+r|D|.
\]
Dividing both sides by $|D|$ and rearranging the terms yields
\[
|V_1^D|\ge \frac{\frac{1}{2}-\frac{r-1}{|D|}}{1-\frac{r-1}{|D|}}n+\frac{r}{1-\frac{r-1}{|D|}}.
\]
Now, by applying $1-\frac{r-1}{|D|}\le 1$ we get
\begin{equation}
\label{general:eq 3}
|V_1^D|\ge \frac{|D|-2r+2}{2|D|-2r+2}n+r.
\end{equation}
Building on Equation~(\ref{general:eq 3}), we prove that any node set of size $2r-1$ has at least one subset of size $r$ which is a dynamo in the $r$-threshold model. There are ${n\choose 2r-1}=\Omega\left(n^{2r-1}\right)$ sets of size $2r-1$ and a set of size $r$ is shared by ${n-r\choose r-1}=\mathcal
{O}\left(n^{r-1}\right)$ sets of size $2r-1$. Thus, there exist $\Omega\left(n^{2r-1}\right)/\mathcal{O}\left(n^{r-1}\right)=\Omega\left(n^r\right)$ distinct dynamos of size $r$.

Let $D_1\subset V$ be an arbitrary set of size $2r-1$. By setting $D=D_1$ in Equation~(\ref{general:eq 3}) and applying $|D_1|=2r-1$, we have $|V_1^{D_1}|\ge n/2r$. Recall that $V_1^{D_1}$ are the nodes which have at least $r$ neighbors in $D_1$. By an averaging argument, there is a subset $D_2\subset D_1$ of size $r$ so that at least $\left(n/2r\right)/{2r-1\choose r}\ge n/\left(2r\right)^{2r}$ of nodes in $V_{1}^{D_1}$ each has $r$ neighbors in $D_2$. We want to prove that $D_2$ is a dynamo in the $r$-threshold model. If initially $D_2$ is fully black, in the next round at least $n/\left(2r\right)^{2r}$ nodes will be black. Now, we prove that if an arbitrary set $D_3$ of size at least $n/\left(2r\right)^{2r}$ is black, the whole graph becomes black in at most three more rounds.

By setting $D=D_3$ in Equation~(\ref{general:eq 3}), we have 
\begin{align*}
|V_1^{D_3}|\ge \frac{|D_3|-2r+2}{2|D_3|}n \ge\frac{n}{2}-\frac{\left(r-1\right)\left(2r\right)^{2r}}{n}n \ge \frac{n}{2}-\left(2r\right)^{2r+1} \ge \frac{n}{2}-\left(2r\right)^{3r}
\end{align*}
which is the number of black nodes generated by $D_3$. Let $D_4$ be a set of size at least $n/2-\left(2r\right)^{3r}$. We show that $|V_1^{D_4}|\ge n/2$, which implies that if $\mathcal{C}_t|_{D_4}=b$ for some $t\ge 0$, there will be at least $\frac{n}{2}$ black nodes in $\mathcal{C}_{t+1}$. Again applying Equation~(\ref{general:eq 3}) gives us
\begin{align*}
|V_1^{D_4}|&\ge \frac{|D_4|-2r+2}{2|D_4|-2r+2}n+r\ge\frac{\frac{n}{2}-\left(2r\right)^{3r}-2r+2}{n-2\left(2r\right)^{3r}-2r+2}n+r\ge\frac{n}{2}.
\end{align*}
The last inequality follows from some straightforward calculations (omitted).

Finally, we prove that if $n/2$ nodes are black in some configuration, the whole graph becomes black in the next round. Consider an arbitrary node $v$. Since $d\left(v\right)\ge \frac{n}{2}+r$ and there are at least $n/2$ black nodes, $v$ has at least $r$ black neighbors and will become black in the next round. \qed
\end{proof}
\section{Robust and Eternal Sets}
\label{general:robust}
In this section, we provide tight bounds on the minimum size of a robust set and an eternal set as a function of $n$, $r$, and $\alpha$. See Tables~\ref{general:Table 2Robust} and~\ref{general:Table 2Eternal} for a summary. 

In the $\alpha$-monotone and $r$-monotone model a black node remains unchanged, which implies that $MR\left(G,\alpha\right)=ME\left(G,\alpha\right)=MR\left(G,r\right)=ME\left(G,r\right)=1$ for a graph $G$. Thus, we focus on the $\alpha$-threshold, $r$-threshold, and majority model in the rest of the section.
\subsection{$\mathbf{\alpha}$-Threshold Model}
\subsubsection{Robust Sets}
We present tight bounds on $\overleftarrow{MR}\left(G,\alpha\right)$ in Theorem~\ref{general:thm7}.
\begin{theorem}
\label{general:thm7}
For a graph $G=\left(V,E\right)$,
\begin{itemize}
\item[(i)]$\lceil\frac{1}{1-\alpha}\rceil\le\overleftarrow{MR}\left(G,\alpha\right)\le n$ for $\alpha>1/2$
\item[(ii)]$2=\lceil\frac{1}{1-\alpha}\rceil\le\overleftarrow{MR}\left(G,\alpha\right)\le 2\alpha n+1/\alpha$ for $\alpha\le1/2$.
\end{itemize}
\end{theorem}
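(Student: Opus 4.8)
The plan is to dispose of the elementary bounds $\lceil\tfrac{1}{1-\alpha}\rceil$ and $n$ quickly and then concentrate on the substantive point, the upper bound $2\alpha n+1/\alpha$ in part~(ii). Throughout I would use that a non-empty $S$ is robust iff $d_S(v)\ge\alpha d(v)$ for every $v\in S$, and that $V$ is trivially robust, which already gives $\overleftarrow{MR}(G,\alpha)\le n$ for every $\alpha$ — this is the upper bound in (i). For the lower bound, let $S$ be robust; if $S\neq V$, connectivity of $G$ produces a vertex $v\in S$ with a neighbour outside $S$, so $\alpha d(v)\le d_S(v)\le d(v)-1$, whence $d(v)\ge\tfrac1{1-\alpha}$ and $|S|\ge d_S(v)+1\ge\alpha d(v)+1\ge\tfrac1{1-\alpha}$; rounding up gives $\lceil\tfrac1{1-\alpha}\rceil$, while if $S=V$ the inequality holds since $n$ is large. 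The identity $\lceil\tfrac1{1-\alpha}\rceil=2$ for $\alpha\le\tfrac12$ is immediate.

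\emph{Reduction for the main bound.} The key step I would isolate is a deletion lemma: if $S_0\neq\emptyset$ satisfies $e(G[S_0])\ge\alpha\sum_{v\in S_0}d(v)$, then iteratively deleting from the current set any vertex $v$ with $d_S(v)<\alpha d(v)$ cannot empty $S_0$, so the process terminates at a non-empty — hence robust — subset of $S_0$. To prove this I would fix the deletion order $v_1,\dots,v_{|S_0|}$: the deletion rule forces $v_i$ to have strictly more than $d_{S_0}(v_i)-\alpha d(v_i)$ neighbours among $v_1,\dots,v_{i-1}$, and summing these ``backward'' counts yields $e(G[S_0])>\sum_i\bigl(d_{S_0}(v_i)-\alpha d(v_i)\bigr)=2e(G[S_0])-\alpha\sum_{v\in S_0}d(v)$, contradicting the hypothesis. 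Hence it suffices to exhibit a non-empty $S_0$ with $|S_0|\le 2\alpha n+1/\alpha$ and $e(G[S_0])\ge\alpha\sum_{v\in S_0}d(v)$.

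\emph{A small $S_0$ by averaging.} If $2\alpha n+1/\alpha\ge n$ (in particular if $\alpha=\tfrac12$) I would just take $S_0=V$, since $e(G[V])=m\ge 2\alpha m=\alpha\sum_v d(v)$. Otherwise let $t$ be the least integer exceeding $2\alpha(n-1)+1$, so that $2\le t\le 2\alpha n+2\le 2\alpha n+1/\alpha$ and $t\le n$, and let $S_0$ be a uniformly random $t$-subset of $V$. Then $\mathbb{E}[e(G[S_0])]=m\binom{t}{2}/\binom{n}{2}$ and $\mathbb{E}\bigl[\sum_{v\in S_0}d(v)\bigr]=\tfrac{2mt}{n}$, so
\[
\mathbb{E}\Big[e(G[S_0])-\alpha\textstyle\sum_{v\in S_0}d(v)\Big]=\frac{mt}{n}\Big(\frac{t-1}{n-1}-2\alpha\Big)>0
\]
by the choice of $t$; hence some $t$-subset $S_0$ satisfies the required inequality, and the deletion lemma turns it into a robust set of size at most $t\le 2\alpha n+1/\alpha$, which finishes (ii).

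\emph{Where the difficulty lies.} The arithmetic around $t$ and the expectations is routine; the genuinely non-obvious point is the choice of potential. Tempting alternatives — a size-minimal robust set, the highest-degree vertices, or a minimum-cut partition into $\approx 1/\alpha$ parts — all fail to give a bound linear in $n$ on general graphs (stars and lopsided complete bipartite graphs are counterexamples), because a small robust set need not be singled out by any simple local statistic. Recognising that the correct invariant is the global inequality $e(G[S_0])\ge\alpha\sum_{v\in S_0}d(v)$, that it is preserved by the greedy deletion, and that a random set of size just above $2\alpha n$ meets it in expectation, is the crux; once that is in place the proof is short.
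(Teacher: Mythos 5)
Your proof is correct, and for the only substantive part --- the upper bound $2\alpha n+1/\alpha$ in (ii) --- it takes a genuinely different route from the paper. Your lower-bound argument and the trivial bound $\overleftarrow{MR}(G,\alpha)\le n$ coincide with the paper's. For the main bound, the paper argues extremally: it considers partitions of $V$ into $\lfloor 1/\alpha\rfloor$ parts, each of size at least $\lfloor\alpha n\rfloor$ (one part allowed size $\lfloor\alpha n\rfloor-1$), picks a partition minimizing the number of cross edges, and shows by a local exchange/averaging argument that the largest part, of size at most $2\alpha n+1/\alpha$, is robust. You instead combine a deletion (degeneracy-type) lemma --- any non-empty $S_0$ with $e(G[S_0])\ge\alpha\sum_{v\in S_0}d(v)$ contains a non-empty robust subset, proved by summing, over a hypothetical full deletion order, the strict inequalities on backward neighbours so that each edge of $G[S_0]$ is counted once --- with the probabilistic method: a uniformly random $t$-set with $t$ just above $2\alpha(n-1)+1$ satisfies the hypothesis in expectation. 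I checked both the lemma and the expectation computation; they are sound, and your choice of $t$ even gives the slightly sharper additive term $2\alpha(n-1)+2\le 2\alpha n+1/\alpha$. What your route buys is a clean, reusable reduction (the deletion lemma makes explicit that robustness only needs a global edge-versus-degree inequality) and less size-bookkeeping; what the paper's route buys is a concrete extremal witness, and the same constrained minimum-cut device is reused verbatim for the majority-model bound of Theorem~\ref{general:majoritythm1}. One aside in your closing paragraph is mistaken: the minimum-cut partition into roughly $1/\alpha$ parts does not fail on general graphs --- with the size constraints above it is precisely the paper's proof; your star and unbalanced complete bipartite examples only defeat the unconstrained version in which parts may shrink below $\lfloor\alpha n\rfloor$.
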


\begin{proof}
To prove the lower bound of $\lceil\frac{1}{1-\alpha}\rceil$, let the node set $S\subseteq V$ be a robust set in the $\alpha$-threshold model. Since $G$ is connected, there is a node $v\in S$ that shares at least one edge with $V\setminus S$, which implies that $d\left(v\right)\ge d_S\left(v\right)+1$. Furthermore, $d_S\left(v\right)\ge \alpha d\left(v\right)$ because $S$ is robust. Hence, $d_S\left(v\right)\ge \alpha\left(d_S\left(v\right)+1\right)$ which yields $d_S\left(v\right)\ge \alpha/(1-\alpha)$. Moreover, $|S|-1\ge d_S\left(v\right)$, which implies that $|S|\ge \alpha/(1-\alpha)+1=1/(1-\alpha)$. As $|S|$ is a positive integer, we have $|S|\ge\lceil1/(1-\alpha)\rceil$.

Now, let us prove the upper bound of $2\alpha n+1/\alpha$. We prove that in the $\alpha$-threshold model on $G$, there is a robust set of size at most $2\alpha n+1/\alpha$. Assume that $\mathcal{P}$ is the set of all partitions of $V$ into $\lfloor 1/\alpha\rfloor$ sets such that all sets are of size at least $\lfloor \alpha n\rfloor$, except one set which can be of size $\lfloor \alpha n\rfloor-1$. Let $P\in \mathcal{P}$ be a partition for which the number of edges between the sets is minimized. Let $V_{\max}$ be a set of maximum size in $P$. Clearly, $V_{\max}$ is at least of size $\alpha n$ and at most of size
\begin{align*}
|V_{\max}|&\le n-\left(\lfloor \frac{1}{\alpha}\rfloor-2\right)\lfloor \alpha n\rfloor-\left(\lfloor\alpha n\rfloor-1\right)\\ &=n-\lfloor \frac{1}{\alpha}\rfloor\lfloor \alpha n\rfloor+\lfloor \alpha n\rfloor+1\\ &\le n-\left(\frac{1}{\alpha}-1\right)\left(\alpha n-1\right)+\alpha n+1\\ &=2\alpha n+1/\alpha.
\end{align*}  
Furthermore, we claim that for each node $v\in V_{\text{max}}$, $d_{V_{\text{max}}}\left(v\right)\ge \alpha d(v)$, which implies that $V_{\max}$ is a robust set of size at most $2\alpha n+1/\alpha$. Assume that there is a node $u$ which violates this property, i.e., $d_{V_{\max}}\left(u\right)<\alpha d(u)$. Then, the average number of edges between $u$ and the $\lfloor 1/\alpha\rfloor-1$ other sets is at least
\[
\frac{\left(1-\alpha\right)d\left(u\right)}{\lfloor \frac{1}{\alpha}\rfloor-1}\ge \frac{\left(1-\alpha\right)d\left(u\right)}{\frac{1}{\alpha}-1}=\alpha d\left(u\right).
\] 
Thus, there must exist a set $V'$ among the other $\lfloor 1/\alpha\rfloor-1$ sets such that $d_{V'}\left(u\right)\ge \alpha d(u)>d_{V_{\max}}\left(u\right)$. This is a contradiction because by removing $u$ from $V_{\max}$ and adding it into $V'$, the number of edges between the sets decreases at least by one. Furthermore, it is easy to see that the new partition is also in $\mathcal{P}$ by using $V_{\max}\ge \alpha n$. \qed
\end{proof}

\textbf{Tightness.} The lower bounds in Theorem~\ref{general:thm7} are tight. Consider graph $G=\left(V,E\right)$ with $V=V_1\cup V_2$, where $|V_1|=n-\lceil 1/(1-\alpha)\rceil$ and $|V_2|=\lceil 1/(1-\alpha)\rceil$. Assume that the nodes in $V_2$ induce a clique, the nodes in $V_1$ induce an arbitrary connected graph, and finally there is an edge between a node $v\in V_2$ and a node $v'\in V_1$. Graph $G$ is connected and has $n$ nodes, by construction. Moreover, $V_2$ is a robust set for $G$ in the $\alpha$-threshold model because firstly for each node $u\in V_2\setminus\{v\}$, $d_{V_2}\left(u\right)=d\left(u\right)\ge \alpha d\left(u\right)$. Furthermore, for node $v$, we have
\begin{align*}
d_{V_2}\left(v\right)=\lceil \frac{1}{1-\alpha}\rceil-1\ge \lceil \frac{1}{1-\alpha}\rceil-(1-\alpha)\lceil\frac{1}{1-\alpha}\rceil=\alpha \lceil \frac{1}{1-\alpha}\rceil=\alpha d\left(v\right).
\end{align*} 
where we used $d\left(v\right)=\lceil \frac{1}{1-\alpha}\rceil$. 

The upper bound of $\overleftarrow{MR}(G,\alpha)\le n$ is tight since $\overleftarrow{MR}\left(C_n,\alpha\right)=n$ for $\alpha>\frac{1}{2}$. We do not know whether the upper bound of $\overleftarrow{MR}(G,\alpha)\le 2\alpha n+1/\alpha$ for $\alpha\le 1/2$ is tight or not.
\subsubsection{Eternal Sets}
For a graph $G$, $1\le\overleftarrow{ME}\left(G,\alpha\right)\le n$ for $\alpha>1/2$ and $1\le\overleftarrow{ME}\left(G,\alpha\right)\le 2\alpha n+1/\alpha$ for $\alpha\le1/2$. All bounds are trivial except $2\alpha n+1/\alpha$, which is a corollary of Theorem~\ref{general:thm7}. (Note that any robust set is also an eternal set) The lower bound of 1 is tight since for the star graph $S_n$ $\overleftarrow{ME}(S_n,\alpha)=1$. The upper bound of $n$ for $\alpha>1/2$ is tight because $\overleftarrow{ME}\left(C_n,\alpha\right)=n$ for $\alpha>1/2$ and odd $n$ (this basically follows from the proof of Lemma~\ref{general:obs1} by replacing black with white and $\alpha\le1/2$ with $\alpha>1/2$).
\subsection{Majority Model}
We want to bound the minimum size of a robust set and an eternal set in the majority model on a graph $G$.
\paragraph{Lower Bounds.}The trivial lower bound of $1\le \overleftarrow{ME}(G,maj)$ is tight for the star graph $S_n$ since the internal node is an eternal set. Any robust is of size at least 2 in this setting and in the cycle $C_n$ two adjacent nodes are a robust set of size 2. 
\paragraph{Upper Bounds.}In Theorem~\ref{general:majoritythm1}, we prove that $\overleftarrow{MR}(G,maj)\le \lfloor n/2\rfloor+1$ which implies that $\overleftarrow{ME}(G,maj)\le \lfloor n/2\rfloor+1$. These upper bounds are tight, up to an additive constant for the complete graph $K_n$.
\begin{theorem}
\label{general:majoritythm1}
For any graph $G=(V,E)$, $\overleftarrow{MR}(G,maj)\le \lfloor n/2\rfloor+1$. 
\end{theorem}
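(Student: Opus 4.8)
The plan is to realize the claimed robust set as the larger side of a near-balanced cut of minimum size, exactly as anticipated in the introduction. First I would let $\mathcal{P}$ be the collection of all partitions $\{V_1,V_2\}$ of $V$ with $|V_1|\ge|V_2|$ and $\lfloor n/2\rfloor-1\le|V_2|\le|V_1|\le\lfloor n/2\rfloor+1$; this family is non-empty because the partition into parts of sizes $\lceil n/2\rceil$ and $\lfloor n/2\rfloor$ lies in it. Among all members of $\mathcal{P}$, I fix one, $\{V_1,V_2\}$ with $|V_1|\ge|V_2|$, for which the number of crossing edges $|\partial(V_1)|$ is as small as possible.

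The core claim is that every $v\in V_1$ satisfies $d_{V_1}(v)\ge d_{V_2}(v)$, equivalently $d_{V_1}(v)\ge d(v)/2$. I would prove this by contradiction: if some $v\in V_1$ had $d_{V_2}(v)>d_{V_1}(v)$, then moving $v$ from $V_1$ to $V_2$ turns its $d_{V_1}(v)$ edges into $V_1$ into crossing edges and removes its $d_{V_2}(v)$ crossing edges into $V_2$, so the cut strictly decreases by $d_{V_2}(v)-d_{V_1}(v)\ge 1$. It then remains to verify that the relocated partition (after relabelling the larger of the two new parts) still lies in $\mathcal{P}$: each part size changes by exactly $1$, so a short case distinction according to the parity of $n$ (and, when $n$ is even, according to whether the two parts were of equal size or differed by $2$) shows the new sizes remain in the window $[\lfloor n/2\rfloor-1,\lfloor n/2\rfloor+1]$. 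This contradicts minimality of the chosen cut. I expect this bookkeeping to be the only point that needs care — it is precisely the reason the size window has width three rather than being an exact bisection — but it is entirely routine.

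Finally I would check that $V_1$ is a robust set for the majority model. Suppose all nodes of $V_1$ are black in some configuration $\mathcal{C}_0$, i.e. $V_1\subseteq B_0$; I claim $V_1\subseteq B_t$ for all $t\ge 0$, by induction on $t$. The base case is immediate. For the inductive step, assume $V_1\subseteq B_t$ and take $v\in V_1$; then $|\Gamma_b^{\mathcal{C}_t}(v)|\ge d_{V_1}(v)\ge d(v)/2\ge|\Gamma_w^{\mathcal{C}_t}(v)|$, so $v$ has at least as many black as white neighbours, and by the tie-breaking rule of the majority model $v$ is black in $\mathcal{C}_{t+1}$, i.e. $v\in B_{t+1}$. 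Hence every node of $V_1$ stays black forever, so $V_1$ is a robust set; it is non-empty since $|V_1|\ge\lceil n/2\rceil\ge 1$, and $|V_1|\le\lfloor n/2\rfloor+1$ by construction, which gives $\overleftarrow{MR}(G,maj)\le\lfloor n/2\rfloor+1$.
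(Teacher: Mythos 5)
Your proposal is correct and follows essentially the same route as the paper: take a near-balanced partition within the size window $[\lfloor n/2\rfloor-1,\lfloor n/2\rfloor+1]$ minimizing the number of crossing edges, move any node of $V_1$ with more neighbors in $V_2$ to get a contradiction, and conclude that every node of $V_1$ has at least half its neighbors in $V_1$, so $V_1$ is robust under the majority tie-breaking rule. Your extra care with the parity bookkeeping and the explicit induction showing $V_1\subseteq B_t$ are just more detailed versions of steps the paper states briefly.
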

\begin{proof}
The proof is similar to the proof of the upper bound of $\overleftarrow{MR}(G,\alpha)\le 2\alpha n+1/\alpha$ in Theorem~\ref{general:thm7}. We partition $V$ into two sets $V_1$ and $V_2$ such that $$\lfloor n/2\rfloor-1\le |V_2|\le|V_1|\le \lfloor n/2\rfloor+1$$ and the number of edges in between is minimized. It is sufficient to prove that $V_1$ is a robust set. Thus, assume otherwise, i.e., there is a node $v\in V_1$ such that $d_{V_1}(v)<d_{V_2}(v)$. In that case, we define $V'_1:=V_1\setminus\{v\}$ and $V'_2:=V_2\cup\{v\}$. We observe that $$\lfloor n/2\rfloor-1\le |V'_1|\le|V'_2|\le \lfloor n/2\rfloor+1$$ and the number of edges between $V'_1$ and $V'_2$ is strictly less than the number of edges between $V_1$ and $V_2$. This is a contradiction. \qed
\end{proof}
\subsection{r-Threshold Model}
\subsubsection{Robust Sets}
For a graph $G$, $\overleftarrow{MR}\left(G,r\right)=2$ for $r=1$ because two adjacent nodes create a robust set. For $r\ge 2$, we have the tight bounds of $r+1\le\overleftarrow{MR}\left(G,r\right)\le n$. Notice that in the $r$-threshold model if in a configuration less than $r+1$ nodes are black, in the next round all black nodes turn white. Furthermore, the lower bound of $r+1$ is tight for $K_n$ and the upper bound is tight for $r$-regular graphs.
\subsubsection{Eternal Sets}
We provide tight bounds on the minimum size of an eternal set in the $r$-threshold model on a graph $G$ in Theorem~\ref{general:thm8}. 
\begin{theorem}
\label{general:thm8}
For a graph $G=\left(V,E\right)$ 
\begin{itemize}
\item[(i)]if $r=1$, $\overleftarrow{ME}\left(G,r\right)=1$
\item[(ii)]if $r=2$, $2\le \overleftarrow{ME}\left(G,r\right)\le \frac{n}{1+x}$, where $x=0$ for odd $n$ and $x=1$ for even $n$
\item[(iii)]if $r\ge 3$, $r\le \overleftarrow{ME}\left(G,r\right)\le n$.
\end{itemize}
\end{theorem}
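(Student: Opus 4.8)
The plan is to handle the three cases separately, building up from the easiest. For (i) with $r=1$: a single black node $v$ is an eternal set because any neighbor $u$ of $v$ has at least one black neighbor (namely $v$) in the next round, so $u$ becomes black; in particular black never dies out. (Here connectedness of $G$ guarantees $v$ has a neighbor.) Combined with the trivial bound $\overleftarrow{ME}(G,r)\ge 1$, this gives equality.

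For (iii) with $r\ge 3$, the lower bound $\overleftarrow{ME}(G,r)\ge r$ needs an argument: if $S$ is an eternal set with $|S|\le r-1$, I would show black dies out. The cleanest route is a potential/counting argument — if at most $r-1$ nodes are black in some configuration, then in the next round \emph{every} node has at most $r-1$ black neighbors, hence no node is black, so black is already dead after one round; thus any eternal set must have size at least $r$ (and since it must keep at least one black node forever, one checks that size exactly $r-1$ already fails). Actually one must be slightly careful: the eternal condition is about the \emph{orbit} starting from $S$, so I would phrase it as: if $|B_t|\le r-1$ for some $t$ then $B_{t+1}=\emptyset$, so no set of size $\le r-1$ is eternal. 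The matching upper bound $\overleftarrow{ME}(G,r)\le n$ is trivial (take $S=V$ — then the whole graph stays black iff $G$ is $r$-regular, but in general $V$ is still trivially an eternal set only if black survives; more carefully, for the upper bound one just needs \emph{some} eternal set, and $S=V$ works only when it keeps a black node, which it does not in general). So for the upper bound I would instead exhibit a small robust set: since $\overleftarrow{MR}(G,r)\le n$ is not automatic either, the honest statement is that $n$ is tight for $r$-regular graphs where indeed $V$ itself is the only eternal set, and for general graphs $n$ is just the trivial ceiling $\overleftarrow{ME}(G,r)\le n$ since $V$ is vacuously... — here I would simply cite that $V$ is always a robust (hence eternal) set in no model unless $r$-regular, so the correct upper bound argument is: pick any robust set, and $\overleftarrow{MR}(G,r)\le n$ holds because... — this is the subtle point I will need to pin down.

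For (ii) with $r=2$, the lower bound $\overleftarrow{ME}(G,r)\ge 2$ follows from the case-(iii) argument with $r=2$: a single black node has, in the next round, no node with $\ge 2$ black neighbors, so black dies. The interesting content is the upper bound $\overleftarrow{ME}(G,r)\le n/(1+x)$. The plan is: if $G$ has a perfect matching (or near-perfect matching when $n$ is odd), take $S$ to be one endpoint of every... no — take $S$ to consist of \emph{both} endpoints of a set of $\lfloor n/2\rfloor$ or so disjoint edges? That gives $|S|=n$ or $n-1$, too large. Instead, the right idea for $r=2$: find a subgraph in which every node has degree $\ge 2$, i.e., a subgraph that is a union of cycles covering many vertices; a set $S$ inducing minimum degree $\ge 2$ is robust. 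To get $|S|\le n/(1+x)$, I would exploit that $G$ is connected: take a spanning structure and note every connected graph on $n$ vertices contains a path/cycle, and more precisely one can 2-pack pairs of adjacent vertices so that... The cleanest approach is: for odd $n$, $x=0$ and the bound is just $n$, trivially any eternal set of size $\le n$ — but we need existence, so take $S=V$; is $V$ eternal when $r=2$? A node of degree $1$ (a leaf) has only $1$ neighbor, hence $<2$ black neighbors, so it turns white, but a leaf's neighbor may still stay black — so $V$ can be eternal even with leaves. I would prove $V$ is always eternal for $r=2$ on a connected graph with $n\ge 3$: pick any edge $\{u,v\}$ with $u,v$ each of degree $\ge 2$ (exists since $G$ connected, $n\ge3$, unless $G$ is a star — handle the star separately where the center plus one leaf is eternal... no, center has all leaves as neighbors so stays black, fine). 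For even $n$, $x=1$ and the bound $n/2$ is the real theorem: I would find an induced subgraph on $\le n/2$ vertices with minimum degree $\ge 2$ (a cycle or theta-graph), using that a connected graph on $n$ vertices has a cycle of length... — but short cycles may not exist (trees have none!). For trees, no induced subgraph has min degree $\ge 2$, so no robust set exists, yet an eternal set must — on a path $P_n$, two adjacent central-ish nodes keep each other alive for a while but eventually die; actually on a path, \emph{no} finite eternal set exists except... so the theorem must be using that the tree case still admits an eternal set of size $\le n/2$ via the dynamics, not via robustness.

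The main obstacle, then, is the upper bound in case (ii) for even $n$ (the bound $n/2$): robust sets may not exist at all (e.g., trees), so I must analyze the actual $2$-threshold dynamics and show that from a suitable set of $\le n/2$ black nodes, black survives forever even while oscillating. I expect the key construction to be taking alternate nodes along a longest path or a BFS layering — e.g., coloring an independent set that "regenerates" under two steps of the dynamics, analogous to the bipartite-oscillation phenomenon used in the proof of Lemma~\ref{general:obs1} and Theorem~\ref{general:thm4}. Concretely, I would pick a bipartition-like set $S$ so that after two rounds the black set returns to something containing a fixed black node, giving periodicity-two survival with $|S|\approx n/2$, and check the extremal case (the complete graph or $C_n$) shows $n/2$ cannot be improved.
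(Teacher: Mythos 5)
Your part (i), and the lower bounds in (ii) and (iii), coincide with the paper's argument and are fine. The first problem in your write-up is that you overlooked the paper's standing assumption $r\le\delta\left(G\right)$ (stated in the ``Assumptions'' paragraph of the Preliminaries). With it, the upper bound $\overleftarrow{ME}\left(G,r\right)\le n$ really is trivial: every node has at least $r$ neighbors, so if all of $V$ is black then every node retains at least $r$ black neighbors, i.e.\ $V$ is robust and hence eternal. The ``subtle point'' you leave unresolved in (iii), and all of your worries in (ii) about leaves, stars, trees and the path $P_n$ admitting no robust set, simply do not arise: for $r=2$ the assumption forces $\delta\ge 2$, so $G$ is not a tree, it contains a cycle, and the node set of any cycle is a robust (hence eternal) set.

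The substantive gap is the bound $n/2$ for $r=2$ and even $n$, which is the only nontrivial part of the theorem and where your proposal stops at a vague plan (``alternate nodes along a longest path or a BFS layering''). You do gesture at the right oscillation idea, but the paper makes it concrete as follows: if $G$ contains an even cycle, then every second node on that cycle is an eternal set, since the two alternating halves of the cycle keep regenerating each other with period two, and this set has size at most $n/2$; so it suffices to find either an even cycle or a cycle of length at most $n/2$ (whose full node set is robust). Take a longest cycle $C$, of length $k$ (it exists since $\delta\ge2$). If $k\le n/2$ or $k$ is even, you are done. If $k>n/2$ and $k$ is odd, then $V$ minus the nodes of $C$ is non-empty (as $n$ is even, $k<n$); start from a node outside $C$ adjacent to $C$ and walk, using $\delta\ge 2$, until you either return to $C$ --- in which case the traversed path together with the odd cycle $C$ closes into two cycles, one of which must be even --- or you revisit a node outside $C$, which yields a cycle using at most the $n-k<n/2$ nodes outside $C$ plus possibly one attachment node, hence of size at most $n-k+1\le n/2$. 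Either way an eternal set of size at most $n/2$ exists. Your proposal contains neither the reduction ``find a short cycle or an even cycle'' nor the longest-cycle-plus-ear argument that guarantees one of them, so as written the key case of the theorem is not proved.
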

\begin{proof}
For $r=1$, any node set of size 1 is eternal because if a node $v\in V$ is black in some configuration in the $1$-threshold model, in the next round all nodes in $\Gamma\left(v\right)$ will be black.

The lower bound of $r$ is trivial because a configuration with less than $r$ black nodes in the $r$-threshold model becomes fully white in the next round. 

All the upper bounds are also trivial, except the bound of $n/2$. Assume that $n$ is even and $r=2$; we prove that $G$ has an eternal set of size at most $n/2$. Let
$C_k:=u_1,u_2,\cdots, u_i,u_{i+1},\cdots, u_k,u_1$ 
be a cycle of length $k$ in $G$, then the node set $U:=\{u_1,\cdots,u_k\}$ is a robust set, and consequently an eternal set in the $2$-threshold model. If $k$ is even, then actually the smaller set $U_{\text{e}}:=\{u_i\in U:i\ \text{is even} \}$ is an eternal set. This is correct because in the $2$-threshold model if for some configuration all nodes in $U_{\text{e}}$ are black, in the next round all nodes in $U\setminus U_{\text{e}}$ will be black, irrespective of the color of the other nodes. One round later, all nodes in $U_{\text{e}}$ will be black again and so on. 

Now, let $C$ of length $k$ be a longest cycle in $G$. If $k\le n/2$, then the set of nodes in $C$ is an eternal set of size at most $n/2$. If $k$ is larger than $n/2$ but it is even, then half of the nodes in $C$ suffice to form an eternal set of size at most $n/2$. Thus we are left with the case that $k$ is odd and $k>n/2$. Let the node set $V_1$ include all nodes in $C$ and $V_2:=V\setminus V_1$. Since the graph is connected and $V_2\ne \emptyset$ (we already have excluded the case of $k= n$ sine $n$ is even), there is a node $y_1\in V_2$ and a node $u\in V_1$ such that $\{y_1,u\}\in E$. We start traversing from $y_1$. Since $r=2$, each node, including $y_1$, is of degree at least $2$ (recall that we always assume $\delta\left(G\right)\ge r$). Thus, $y_1$ in addition to $u$ has at least another neighbor, say $y_2$. Since $y_2$ is of degree at least $2$ as well, it must be adjacent to another node, say $y_3$, and so on. Assume that $y_{k'}$ for some $k'\ge 2$ is the first node from $V_1$, which we visit during the traversing. The union of path $u, y_1, y_2,\cdots, y_{k'}$ and cycle $C$ gives us two new cycles if $u\ne y_{k'}$. Since the length of cycle $C$ is odd, one of these two new cycles must be of even length (see Fig.~\ref{general:GeneralBoundsFig8}). This even cycle provides us with an eternal set of size $n/2$. If $u=y_{k'}$ then we have a cycle whose nodes are all from $V_2$ plus node $u$. This cycle is of size at most $n/2$ because $|V_2|=|V|-|V_1|=n-k< n/2$ by applying $k>n/2$. This gives an eternal set of size at most $n/2$. If we never visit a node from $V_1$, then we eventually revisit a node $y_j\in V_2$ which gives us a cycle on some nodes in $V_2$. Note this cycle is of size at most $n/2$ since $|V_2|<n/2$.
\begin{figure}[hbt!]
\centering
\includegraphics[scale=0.8]{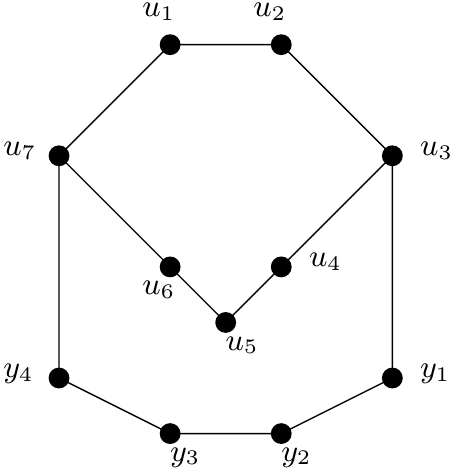}
\caption{By connecting each of the endpoints of a path into a different node on a cycle, two new cycles are generated. If the original cycle is of odd size, here $7$, then one of the two new cycles is of even size, here the cycle $u_1,u_2,u_3,y_1,\cdots,u_7,u_1$.}
\label{general:GeneralBoundsFig8}
\end{figure} \qed
\end{proof}

\textbf{Tightness.} The lower bounds are tight because $\overleftarrow{ME}\left(K_n,r\right)=r$ for $r\ge 1$. The upper bounds for $r=2$ are tight because $\overleftarrow{ME}\left(C_n,2\right)$ is equal to $n$ for odd $n$ and it is equal to $n/2$ for even $n$. 

For the tightness of the upper bound of $\overleftarrow{ME}(G,r)\le n$ for $r\ge 3$, we prove that for any sufficiently large $n$ (regardless of its parity), there is an $n$-node graph $G=\left(V,E\right)$ which has no eternal set of size smaller than $n-2r-1$. Let $n$ be even; we will show how our argument applies to the odd case. We first present the construction of graph $G$ step by step. For $1\le i\le K:=\lfloor n/\left(r+1\right)\rfloor-1$, let $G_i$ be the clique on the node set $V_i:=\{v_i^{\left(j\right)}:1\leq j\leq r+1\}$ minus the edge $\{v_i^{\left(1\right)},v_i^{\left(2\right)}\}$. To create the first part of graph $G$, we connect $G_i$s with a path. More precisely, we add the edge set $\{\{v_i^{\left(2\right)},v_{i+1}^{\left(1\right)}\}: 1\le i\le K-1\}$. So far the generated graph is $r$-regular except the nodes $v_1^{\left(1\right)}$ and $v_K^{\left(2\right)}$ which are of degree $r-1$ and we have $\ell:=n-K\left(r+1\right)$ nodes left. Let $G'$ be an arbitrary $r$-regular graph on $\ell$ nodes. Now, remove an edge $\{v',u'\}$ from $G'$ and connect $v'$ to $v_1^{\left(1\right)}$ and $u'$ to $v_{K}^{\left(2\right)}$. Clearly the resulting graph $G$ is $r$-regular with $n$ nodes (see Figure~\ref{general:GeneralBoundsFig9} for an example). However, we should discuss that such a graph $G'$ exists. The necessary and sufficient condition for the existence of an $r$-regular graph on $\ell$ nodes is that $\ell\ge r+1$ and $r\ell$ is even. Firstly, we have $$\ell=n-\left(\left\lfloor \frac{n}{\left(r+1\right)}\right\rfloor-1\right)\left(r+1\right)\ge n-\left( \frac{n}{\left(r+1\right)}-1\right)\left(r+1\right) =r+1.$$ Furthermore, if $r$ is even, then $r\ell$ is even; thus, assume otherwise. Since $r$ is odd and $n$ is even, then $rn$ is even. In addition, $r\left(r+1\right)$ is even, which implies that $r\left(r+1\right)\left(\lfloor n/\left(r+1\right)\rfloor-1\right)$ is even. Overall, $r\ell=rn-r\left(r+1\right)\left(\lfloor n/\left(r+1\right)\rfloor-1\right)$ is even. 
\begin{figure}[H]
\centering
\includegraphics[scale=0.8]{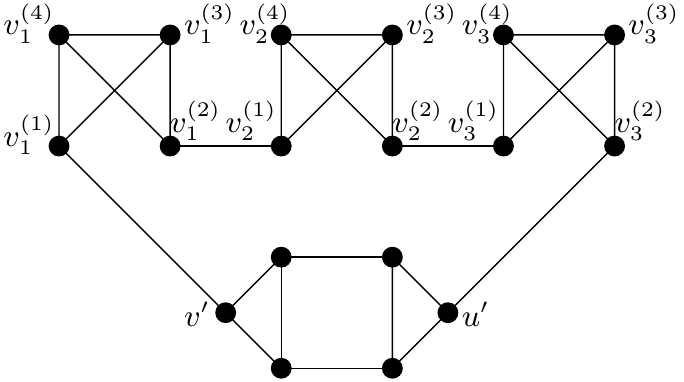}
\caption{The construction of graph $G$ for $r=3$ and $n=18$.}
\label{general:GeneralBoundsFig9}
\end{figure}
Now, we claim that all nodes in $V_i$ for $1\le i\le K$ must be in any eternal set. This implies that the minimum size of an eternal set in $G$ is at least $$K\left(r+1\right)=\left(\left\lfloor \frac{n}{\left(r+1\right)}\right\rfloor-1\right)\left(r+1\right)\ge n-2r-1.$$ Assume that we start from a configuration in which all nodes are black except a node $v$ in $V_i$ for some $1\le i\le K$. If $v$ is $v_i^{\left(1\right)}$ or $v_i^{\left(2\right)}$, then in the next round $v_i^{\left(3\right)}$ and $v_i^{\left(4\right)}$ (which must exist because $r\ge 3$) are both white since each of them has at most $r-1$ black neighbors. By construction, there is an edge between $v_i^{\left(3\right)}$ and $v_i^{\left(4\right)}$. Since $G$ is $r$-regular, in the next round they both stay white and all their neighbors become white as well. By an inductive argument in the $t$-th round for $t\ge 1$, all nodes whose distance is at most $t$ from $v_i^{\left(3\right)}$ (or similarly $v_i^{\left(4\right)}$) will be white. Thus, eventually the whole graph becomes white. Now, assume that $v$ is a node in $V_i\setminus\{v_i^{\left(1\right)},v_i^{\left(2\right)}\}$. Again, in the next round $v_i^{\left(1\right)}$ and one round after that $v_i^{\left(3\right)}$ and $v_i^{\left(4\right)}$ become white and the same argument follows. 

If $n$ is odd, we do the same construction for $n-1$ instead of $n$ and at the end, add a node $w$ and connect it to $v_{i}^{\left(2\right)}$ for $1\le i\le r$. This graph is not $r$-regular since there are $r$ nodes of degree $r+1$. However, a similar argument applies since from any configuration with two adjacent white nodes in one of $V_i$s, white color eventually takes over. 

\bibliographystyle{acm}
\bibliography{refer}
\end{document}